\theoremstyle{plain}
\newtheorem{theorem}{Theorem}[section]
\newtheorem{lemma}[theorem]{Lemma} 
\newtheorem{proposition}[theorem]{Proposition}
\newtheorem{claim}[theorem]{Claim}
\newtheorem{question}[theorem]{Question}
\theoremstyle{definition}
\newtheorem{definition}[theorem]{Definition}
\newcounter{todocounter}
\newcommand{\sse}{\subseteq}
\newcommand{\zo}{\{0,1\}}
\newcommand{\zon}{\zo^n}
\newcommand{\defn}{\stackrel{\text{\tiny def}}{=}}
\newcommand{\set}[1]{\left\{ #1 \right\}}
\newcommand{\mytilde}[1]{\overset{\sim}{#1}}
\newcommand{\etal}{\textit{et al}.\@\xspace}
\newcommand{\ie}{i.e.}
\newcommand{\DT}{\mathsf{DT}}
\newcommand{\psens}{\mathsf{psens}}
\newcommand{\psenss}{\mathsf{\widetilde{psens}}}
\newcommand{\EC}{\mathsf{EC}}
\newcommand{\negs}{\mathsf{negs}}
\newcommand{\fEC}{\mathsf{EC^{F}}}
\newcommand{\mEC}{\mathsf{EC^{M}}}
\newcommand{\entm}{\mathsf{H_{max}}}
\newcommand{\KW}{\mathsf{KW}}
\newcommand{\stconn}{{\mathsf{ STCONN}}}
\newcommand{\calB}{{\cal B}}
\newcommand{\calP}{{\cal P}}
\newcommand{\calT}{{\cal T}}
\newcommand{\msize}{\mathsf{mSize}}
\newcommand{\size}{\mathsf{Size}}
\newcommand{\depth}{\mathsf{Depth}}
\newcommand{\energy}{\mathsf{Energy}}
\def\movetoappendix{0}
\newenvironment{aproof}[2]
  { \@nameuse{collect}{appendix}
  { \subsection{#1} \label{#2} \begin{proof} } {\end{proof}}
  }{\@nameuse{endcollect}}
\newenvironment{appsection}[2]
  { \@nameuse{collect}{appendix}
  { \subsection{#1} \label{#2} }
  {}
  }{\@nameuse{endcollect}}
\newtheorem*{rep@theorem}{\rep@title} \newcommand{\newreptheorem}[2]{
\newenvironment{rep#1}[1]{
        \def\rep@title{{\textbf{#2}} \ref{##1}}
        \begin{rep@theorem} }
        {\end{rep@theorem} } }
\newcommand{\footremember}[2]{
	\footnote{#2}
	\newcounter{#1}
	\setcounter{#1}{\value{footnote}}
}
\newcommand{\footrecall}[1]{
	\footnotemark[\value{#1}]
}
\title{Characterization and Lower Bounds for Branching Program Size using Projective Dimension}
\author{Krishnamoorthy Dinesh\footremember{a}{Indian Institute of Technology Madras, Chennai, India.
		(\texttt{\{kdinesh,sajin,jayalal\}@cse.iitm.ac.in})} \and Sajin Koroth\footrecall{a} \and  Jayalal Sarma\footrecall{a}
}
\title{New Bounds for Energy Complexity of Boolean Functions\footnote{ Preliminary version appeared in 24th International Computing and Combinatorics Conference (COCOON 2018).}}
\author{Krishnamoorthy Dinesh\footremember{a}{Indian Institute of Technology Madras, 
Chennai, India. Email: {\{\tt kdinesh,jayalal\}@cse.iitm.ac.in}}  \and 
Samir Otiv\thanks{Maximl Labs, India. 
Part of the work was done while this author was a student at the Indian 
Institute of Technology Madras.
Email: {\tt  samir.otiv@maximl.com}}  \and Jayalal Sarma\footrecall{a} } 
\date{}
\begin{document}
\maketitle
\begin{abstract}
For a Boolean function $f:\zon \to \zo$ computed by a Boolean circuit $C$ over a 
finite basis $\calB$, the \textit{energy complexity} of $C$ (denoted by $
\EC_{\calB}(C)$) is the maximum over all inputs $\zon$ of the number gates 
of the circuit $C$ (excluding the inputs) that output a one. Energy 
Complexity of a Boolean function over a finite basis $\calB$ denoted by $
\EC_\calB(f)\defn \min_C \EC_{\calB}(C)$ where $C$ is a Boolean circuit over $\calB
$ computing $f$.

We study the case when $\calB = \{\land_2, \lor_2, \lnot\}$, the standard
Boolean basis. It is known that any Boolean function can be computed by a
circuit (with potentially large size) with an energy of at most
$3n(1+\epsilon(n))$ for a small $ \epsilon(n)$(which we observe is improvable
to $3n-1$). We show several new results and connections between energy
complexity and other well-studied parameters of Boolean functions.
\begin{itemize}
\item For all Boolean functions $f$, $\EC(f) \le O(\DT(f)^3)$ where $\DT(f)$ is the optimal decision tree depth of $f$. 
\item We define a parameter \textit{positive sensitivity} (denoted by
$\psens$), a quantity that is smaller than sensitivity [Cook \etal, SIAM Journal of Computing, 15(1):87--97, 1986]
and defined in a similar 
way, and show that for any Boolean circuit $C$ computing a Boolean 
function $f$, $ \EC(C) \ge \psens(f)/3$.
\item For a monotone function $f$, we show that $\EC(f) = \Omega(\KW^+(f))$ where $\KW^+(f)$ is the cost of monotone Karchmer-Wigderson game of $f$.
\item Restricting the above notion of energy complexity to Boolean formulas,  we show $\EC(F) = \Omega\left (\sqrt{\L(F)}-\depth(F)\right )$ where $\L(F)$ is the size and $\depth(F)$ is the depth of a formula $F$.
\end{itemize}
\end{abstract}
\section{Introduction}
For a Boolean function $f:\zon \to \zo$ computed by a Boolean circuit $C$ over a basis
$\calB$, the \textit{energy complexity} of $C$ (denoted by $\EC_{\calB}(C)$) is
the maximum over all inputs $\zon$ the numbers of gates of the circuit $C$
(excluding the inputs) that outputs a one. The energy complexity of a Boolean
function over a basis $\calB$ denoted by $\EC_\calB(f)\defn \min_C
\EC_{\calB}(C)$ where $C$ is a Boolean circuit over $\calB$ computing $f$. A
particularly interesting case of this measure of Boolean function, is when the
individual gates allowed in the basis $\mathcal{B}$ are threshold gates (with
arbitrary weights allowed). In this case, the term energy in the above model
captures the number of neurons firing in the cortex of the human brain (see
\cite{UDM06} and the references therein).  This motivated the study of upper
and lower bounds \cite{UDM06} on various parameters of energy efficient
circuits - in particular the question of designing threshold circuits which are
efficient in terms of energy as well as size computing various Boolean
functions.

Indeed, irrespective of the recently discovered motivation mentioned above, the
notion of energy complexity of Boolean functions, has been studied much before.
Historically, the measure of energy complexity of Boolean functions\footnote{A
related notion has been studied in~\cite{Glo82} where the energy is the number
times the gates in a circuit switches its value. Recent
studies~\cite{ABNPS14,NMKM15} looks at the energy of a circuit as a function of
the voltage applied to the gates thereby allowing some gates to fail. 
We remark that the notion of energy of Boolean circuits studied in this paper
is very different from those studied in the works mentioned.} was first studied
by Vaintsvaig~\cite{V62} (under the name ``power of a circuit''). Initial
research was aimed at understanding the maximum energy needed to compute any
$n$ bit Boolean function for a \textit{finite} basis $\mathcal{B}$ (denoted by
$\EC_{\calB}(n)$). Towards this end, Vaintsvaig~\cite{V62} showed that for any
finite basis $\mathcal{B}$, the value of $\EC_{\calB}(n)$ is asymptotically
between $n$ and $\frac{2^n}{n}$. Refining this result further,
Kasim-zade~\cite{K92} gave a complete characterization by showing the
following remarkable trichotomy: for any finite complete basis $\calB$, either
$\EC_\calB(n) = \Theta(2^n/n)$ or $ \Omega(2^{n/2}) \le \EC_\calB(n) \le
O(\sqrt{n}2^{n/2})$ or $\Omega(n) \le \EC_\calB(n) \le O(n^2)$.

An intriguing question about the above trichotomy is where exactly does the standard 
Boolean basis $\calB=\{ \land_2, \lor_2, \lnot \}$ fits in. By an explicit Boolean circuit 
construction, Kasim-zade~\cite{K92} showed that $\EC_
\calB(n) \le O(n^2)$. 
Recently, Lozhkin and Shupletsov~\cite{LS15} states (without proof) 
that the Boolean circuit construction by Kasim-zade~\cite{K92} over the complete Boolean 
basis is of energy $4n$, thus deriving that $\EC_\calB(n) \le 4n$. 
Lozhkin and Shupletsov improves it to $3n(1+\epsilon(n))$ by constructing a Boolean circuit 
of size 
$\frac{2^n}{n}(1+\epsilon(n))$ for an $\epsilon(n)$ tending to $0$ for large $n$. 
We observe that this bounds can be further improved to be at most $3n-1$ while the 
size is $2^{O(n)}$
by carefully following the construction in~\cite{LS15}~(\cref{ub:fanin-bounded}).

As mentioned in the beginning, in a more recent work, for the case when the basis 
is threshold gates\footnote{With values of the weights and threshold being 
arbitrary rational numbers, notice that this basis is no longer finite and hence 
the bounds and the related trichotomy are not applicable.}, Uchizawa~\etal~\cite{UDM06} initiated the study of energy complexity for threshold circuits. 
More precisely, they defined the energy complexity of threshold circuits and gave 
some sufficient conditions for certain functions to be computed by small energy 
threshold circuits. In a sequence of works, Uchizawa \etal~\cite{UNT10,UTN11} 
related energy complexity of Boolean functions under the threshold basis to the 
other well-studied parameters like circuit size and depth for interesting classes 
of 
Boolean functions. In a culminating result, Uchizawa and Takimoto~\cite{UT08} 
showed that constant depth thresholds circuits of unbounded weights with the 
energy restricted to $n^{o(1)}$ needs exponential size to compute the Boolean inner 
product function\footnote{$\IP(x,y) = \sum_i x_i y_i \mod 2$}. 
This is also important in the context of circuit lower bounds, 
where it is an important open question to prove exponential lower bounds against 
constant depth threshold circuits in general (without the energy constraints) for 
explicit functions.\\[-2mm]

\noindent{\bf Our Results:} Returning to the context of standard Boolean basis $
\calB = \{\land_2, \lor_2, 
\lnot\}$, we show several new results and connections between energy complexity and 
other Boolean function parameters. Since we are interested only in the standard 
Boolean basis $\calB$, we use $\EC(f)$ to denote $\EC_{\calB}(f)$.\\[-2mm]

\noindent {\bf Upper bounds for Energy Complexity:} As our first and 
main contribution, we show new bounds on energy complexity  of Boolean functions by 
two other parameters of functions, 
one in terms of an upper bound and the other in terms of a lower bound.

For a function $f:\zon \to \zo$, let $\DT(f)$ denote the decision tree 
complexity of the Boolean function - the smallest depth of any decision tree 
computing the function $f$. We state our main result: 
\begin{theorem}[Main]\label{energy:dt}
For any Boolean function $f$, $\EC(f) \le O(\DT(f)^3)$.
\end{theorem}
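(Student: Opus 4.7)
The plan is to translate the depth-$d$ decision tree $T$ for $f$, where $d = \DT(f)$, into a layered Boolean circuit that simulates the root-to-leaf traversal on every input, arranging the negations carefully so that the energy stays polynomial in $d$ even though the circuit may have size exponential in $d$. Without loss of generality I assume $T$ is a complete binary tree of depth $d$ (if not, extend it by duplicating leaves so the depth is unchanged).

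Let $V_i$ be the set of nodes at level $i$ of $T$ and, for each internal node $v$, let $j(v)$ be the variable queried at $v$. The circuit maintains, at every level $i$, a family of path-indicator gates $\{g_v : v \in V_i\}$ with the invariant that on every input $x$, exactly one $g_v$ equals $1$, namely the indicator of the node $v_i^*$ visited by the computation path of $x$. Starting from $g_{\mathrm{root}} = 1$ (a constant wire), I advance one level as follows. First compute $X_i \defn \bigvee_{v \in V_i} (g_v \land x_{j(v)})$; since only one $g_v$ is $1$, this value equals $x_{j(v_i^*)}$. Next compute $\bar X_i \defn \lnot X_i$ using a single NOT gate. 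Finally, for each $w \in V_{i+1}$ with parent $u$, set $g_w \defn g_u \land X_i$ if $w$ is a right child and $g_w \defn g_u \land \bar X_i$ if $w$ is a left child; the invariant propagates because $g_u = 0$ forces $g_w = 0$ whenever $u \ne v_i^*$. The circuit's output is the disjunction of $g_v$ over the $1$-leaves $v$ at level $d$.

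For the energy, observe that at level $i$ the OR computing $X_i$ is a balanced binary tree of depth $O(i)$ on $|V_i| \le 2^i$ inputs, at most one of which is $1$. Hence this stage contributes at most $O(i)$ active gates (one leaf AND and a single root-to-leaf path in the OR tree). Exactly one of $X_i$ and $\bar X_i$ equals $1$, so the NOT gate adds at most $1$. Among the $|V_{i+1}|$ AND gates that produce the next level's indicators, again at most one is $1$. Summing over all $d$ levels gives $O(d^2)$ active gates, and the final OR over the $1$-leaves is a balanced tree of depth $d$ contributing another $O(d)$ --- well within the claimed $O(\DT(f)^3)$ bound.

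The main technical obstacle I anticipate is controlling the negation gates. Naively precomputing $\lnot x_j$ for every variable and broadcasting to the branching nodes fails badly: a decision tree of depth $d$ can query up to $2^d - 1$ distinct variables, so shared NOT gates on the inputs alone could light up $\Theta(n)$ times on inputs with many zero-valued coordinates. The crux of the construction is to defer every negation until \emph{after} monotone AND/OR gates have already selected the unique relevant input bit at the current level; since at most one input of the monotone selection subcircuit is $1$, the selection itself lights up only $O(\log |V_i|)$ gates, and exactly one NOT per level suffices. This select-then-negate discipline is what drives the polynomial energy bound.
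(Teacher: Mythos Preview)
Your proposal is correct and in fact yields $\EC(f)=O(\DT(f)^2)$, stronger than the $O(\DT(f)^3)$ the paper establishes. The route is genuinely different. The paper argues by induction on the decision tree: writing $f=(\lnot x_1\land f_0)\lor(x_1\land f_1)$, it combines the inductively built circuits for $f_0,f_1$ through a Markov-style \emph{connector circuit} that merges one negation from each side; this gives $\negs(C)\le\DT(f)$ and energy $O(\DT(f)^2)$, but the resulting $\land$-gates have fan-in up to $\DT(f)+2$, and replacing each by a fan-in-$2$ tree costs the extra factor of $\DT(f)$. Your path-indicator simulation sidesteps that blow-up: by \emph{monotonically} selecting the single live bit at the current level (the OR over $g_v\land x_{j(v)}$) \emph{before} applying that level's sole negation, you stay in fan-in~$2$ throughout and spend only $O(i)$ active gates at level~$i$, summing to $O(d^2)$. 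Both constructions end up using exactly $\DT(f)$ negation gates; the paper's approach makes the tie to Markov's negation bound explicit, while yours is more elementary and delivers the tighter bound directly.
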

We remark that the size of the Boolean circuit constructed above is exponential in $
\DT(f)$. There are several Boolean functions for which the decision trees are very shallow - a demonstrative example is the \textit{tree function} (see~\cref{sec:prelims} for a definition) computable by a decision tree of depth $\log n$. Our result implies that there is a Boolean circuit computing this function of energy $O(\log^3 n)$.

In terms of the energy of the circuit, this improves the 
bounds of Lozhkin and Shupletsov~\cite{LS15} when $\DT(f)$ is strictly smaller than $\sqrt[3]{n}$.

On a related note, Uchizawa~\etal~\cite{UDM06}, as a part of their main proof, 
showed a similar result for threshold decision trees which are decision trees 
where each internal node can query an arbitrary weighted threshold function on 
input variables. Let $\DT_{th}(f)$ denotes the depth of the smallest depth threshold 
decision tree computing $f$. For a basis $\calT$ consisting of arbitrary threshold 
functions, their results implies that $\EC_{\calT}(f) \le 1+ \DT_{th}(f)$ (see~
\cref{th-dtlb} for details). 
Since their construction produces a weighted threshold circuit, it does not 
directly give us a low energy Boolean circuit even for Boolean decision trees.\\[-2mm]

\noindent {\bf Update after our work~\cite{DOS18} in connection with \cref{energy:dt}:} Recently (after the conference version of our work~\cite{DOS18} was published), Sun~\etal~\cite{SSWX18} improved the upper bound for $\EC(f)$ in~\cref{energy:dt} from $O(\DT(f)^3)$ to $O(\DT(f)^2)$. In addition, they also showed that $\EC(f) = \Omega(\sqrt{\DT(f)})$ thereby showing that the two parameters are polynomially related.\footnote{See~\cref{ec:lb:dt} for a discussion and comparison of techniques for proving lower bound on energy complexity.} While this improves our main result of the conference version of the paper~\cite{DOS18} in terms of the relationship between energy complexity and decision tree complexity of a Boolean function, in terms of the lower bound that we can obtain for energy complexity, the improvement really depends on the function family considered. We demonstrate this in a comparison between the lower bound methods in Section~\ref{ec:lb:dt}.\\[-2mm]

\noindent {\bf Lower Bounds for Energy Complexity :} 
To obtain lower bounds on energy, we define a new parameter called the positive 
sensitivity (which is at most the sensitivity of the Boolean function~\cite{cdr86}). Let $[n] \defn \set{1,\ldots,n}$. For a 
function $f:\zon \to \zo$ and an input $a \in \zon$, we define the {\em positive 
sensitivity} of $f$ on $a$ (denoted by $\psens(f,a)$) as the number of indices $i\in[n]$ such that $a_i=1$ and flipping the bit $a_i$ causes the function to change its value. We define $\psens(f)$ to be $\max_{a \in \{0,1\}^n}\psens(f,a)$.
Using this parameter, we show the following.
\begin{theorem}\label{psens:energy}
For any Boolean function $f:\zon \to \zo$ computed by a Boolean circuit $C$, $ \EC(C) \ge 
\psens(f)/3$.
\end{theorem}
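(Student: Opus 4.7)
The plan is to pick an input $a^* \in \zon$ that attains $\psens(f)=k$, let $S = \{i : a^*_i=1,\; f(a^*\oplus e_i)\ne f(a^*)\}$, and exhibit, for each $i_j \in S$, a gate of $C$ forced to output $1$ on some (not necessarily common) input. The key observation is that for every $i_j \in S$ there is a \emph{sensitive path} $x_{i_j}=g_0,g_1,\ldots,g_m=\text{output}$ in $C$ along which flipping $g_{t-1}$ on $a^*$ flips $g_t$. I would partition $S = A\sqcup B$, where $j\in A$ if some sensitive path for $i_j$ has first gate of type $\land$ or $\lor$, and $j\in B$ if every sensitive path for $i_j$ begins with a $\lnot$ gate.

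For $j\in A$, pick such a sensitive path and let $g_j$ be its first gate. If $g_j = x_{i_j}\land y$ then sensitivity forces $y=1$, so $g_j(a^*)=1$; if $g_j = x_{i_j}\lor y$ then sensitivity forces $y=0$, so again $g_j(a^*)=1$. Because $\land/\lor$ gates have fan-in two, at most two indices in $A$ can be assigned the same witness, giving $\energy(a^*) \ge |A|/2$ and hence $\EC(C)\ge |A|/2$.

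For $j\in B$, let $g_j$ be a $\lnot$ gate, necessarily of the form $\lnot x_{i_j}$, that starts a sensitive path. These gates output $0$ on $a^*$, but on the single modified input $a^{(B)} := a^*\oplus\bigoplus_{j\in B} e_{i_j}$ each $x_{i_j}$ with $j\in B$ is now $0$, so each $g_j$ outputs $\lnot 0=1$. Since $\lnot$ has fan-in one, the $g_j$ for $j\in B$ are pairwise distinct, giving $\energy(a^{(B)}) \ge |B|$ and hence $\EC(C)\ge |B|$.

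Combining the two lower bounds yields $\EC(C) \ge \max(|A|/2,\; |B|) \ge (|A|+|B|)/3 = \psens(f)/3$; the last step is an elementary inequality---if both $|A|/2<k/3$ and $|B|<k/3$ held, adding them would contradict $|A|+|B|=k$. The only mildly delicate point I anticipate is confirming that the partition into $A$ and $B$ is well-defined, i.e.\ that every $i_j \in S$ admits \emph{some} sensitive path (and hence a well-typed first gate $g_1$); this is the standard inductive unpacking of sensitivity at the output into sensitivity at a successor gate along a suitable wire, and is not a genuine obstacle.
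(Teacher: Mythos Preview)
Your proof is correct and takes a genuinely different route from the paper's. The paper first establishes a structural lemma (\cref{cpath:gen-circuit}): on the single input $a$, every positive sensitive index $i$ admits a \emph{continuous positive path}---a path of gates all outputting $1$ on $a$---from $x_i$ up to either the root or to a gate feeding a negation. It then assembles these paths, together with a binary tree on the target gates, into a connected fan-in-$2$ DAG with $\psens(f,a)$ sources and at most $|X|+|T|\le \EC(C)+\negs(C)+1\le 2\EC(C)+1$ internal nodes, giving the factor $3$ via the source/internal-node inequality. Your argument is more local and more elementary: you never build a long positive path, only inspect the \emph{first} gate on a sensitive path and split cases on its type, using two different witness inputs $a^*$ and $a^{(B)}$ rather than a single one. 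This bypasses the inductive proof of \cref{cpath:gen-circuit} entirely. The trade-off is reusability: the continuous-positive-path lemma is the same tool the paper exploits to bound the monotone Karchmer--Wigderson cost in \cref{lb:kw:energy}, so the paper's extra work pays dividends elsewhere, whereas your argument is tailored to this one bound. One cosmetic point: your $|B|$ bound can be phrased without introducing $a^{(B)}$ at all, since the $|B|$ distinct leaf negations already witness $\negs(C)\ge|B|$ and hence $\EC(C)\ge|B|$ by \cref{energy:negs}.
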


The main tool in proving the above results is the 
notion of \textit{continuous positive paths} which are paths in a Boolean circuit where all 
the gates in the path evaluate to $1$. Using the same tool, we show that the 
monotone Karchmer-Wigderson games can be 
solved by exchanging at most $\EC(C) \log c$ where $C$ is a Boolean circuit with 
fan-in at most $c$ (see \cref{lb:kw:energy} for more details). This implies the 
following energy lower bound for computing  monotone functions.

\begin{theorem}\label{kw:lb}
Let $f:\zon \to \zo$ be a monotone function. Then $\EC(f) = \Omega(\KW^+(f))$.
\end{theorem}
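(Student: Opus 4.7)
The plan is to build a communication protocol for $\KW^+(f)$ from any circuit $C$ of fan-in at most $c$ computing $f$, whose cost is at most $\EC(C) \log c$. For the standard basis $c = 2$ this gives $\KW^+(f) \le \EC(C)$ for every circuit $C$, and minimizing over $C$ yields the claim. I will first place $C$ in negation normal form (only AND and OR gates internally, negations at the leaves) at the cost of at most a constant-factor blow-up in energy.

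Given Alice's input $x \in f^{-1}(1)$ and Bob's $y \in f^{-1}(0)$, the protocol tracks a ``current gate'' $g$ satisfying the invariant $g(x) = 1$ and $g(y) = 0$; this is initially true at the output. At an OR gate, Alice transmits (in $\log c$ bits) the index of a child $c$ with $c(x) = 1$, which exists; the invariant is preserved for $c$ since $g(y) = 0$ forces every child of a $0$-valued OR to be $0$ on $y$. At an AND gate, Bob symmetrically sends the index of a child $c$ with $c(y) = 0$. This is the \emph{continuous positive path} idea: the sequence of visited gates is pairwise distinct and all $1$-valued on $x$, so the number of rounds is at most the number of gates of $C$ that are $1$ on $x$, which is at most $\EC(C, x) \le \EC(C)$. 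The total communication is therefore at most $\EC(C) \log c$.

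At the final leaf we have a literal valued $1$ on $x$ and $0$ on $y$. If it is a positive literal $x_i$ then $(x_i, y_i) = (1, 0)$ gives a $\KW^+$ witness. The delicate case is a negative literal $\lnot x_i$, which would yield $(x_i, y_i) = (0, 1)$, the wrong direction for monotone KW. I would rule this out by conceptually simulating the same protocol on the symmetrized pair $(x', y') = (x \vee y, \; x \wedge y)$, for which $f(x') = 1$ and $f(y') = 0$ by monotonicity, and crucially $x' \ge y'$ pointwise. Under the corresponding invariant $g(x') = 1$, $g(y') = 0$, a negative literal $\lnot x_i$ would force $x'_i < y'_i$, contradicting $x' \ge y'$; hence only positive leaves are reachable, and the returned index gives a place where $x_i \neq y_i$, with one additional bit of communication determining the direction.

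The main obstacle I expect is to show that this symmetrized simulation can be performed while spending only $\log c$ bits per round, since neither party individually knows $x'$ or $y'$. I plan to reconcile this by arguing that at each visited gate of the NNF circuit, Alice's choice based on $x$ and Bob's choice based on $y$ are also valid choices for the protocol on $(x', y')$, so both invariants are maintained simultaneously at each visited gate without any extra communication beyond the $\log c$ bits per round already accounted for. The impossibility of negative leaves then follows directly from the structural inequality $x' \ge y'$, completing the proof.
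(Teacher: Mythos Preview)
There are two genuine gaps. First, putting $C$ into negation normal form at only a constant-factor cost in energy is unjustified and in general false: the standard double-rail/De~Morgan transformation produces a circuit in which, on every input, exactly one of the two copies of each original gate fires, so the energy of the NNF circuit equals the \emph{size} of $C$, not $O(\EC(C))$. Since your round bound is the energy of the NNF circuit, this already breaks the cost estimate. Second, and independently, the symmetrized-input step does not go through. You need Alice's choice at an OR gate (a child $c$ with $c(x)=1$) to also satisfy $c(x')=1$, and dually for Bob at an AND gate; but gates of an NNF circuit are monotone only in the literals, not in the original variables, so $x'\ge x$ does not imply $c(x')\ge c(x)$. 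Concretely, let $f=x_1$ be computed by the NNF circuit $(x_1\lor\lnot x_2)\land(x_1\lor x_2)$, with $x=(1,0)\in f^{-1}(1)$ and $y=(0,1)\in f^{-1}(0)$. At the root AND, Bob's only valid move (using $y$) is the left child $x_1\lor\lnot x_2$; Alice may then pick $\lnot x_2$ and reach a negative literal. But with $x'=(1,1)$, $y'=(0,0)$, the left child evaluates to $1$ on $y'$, so Bob's very first move already violates the invariant $g(y')=0$; the two invariants cannot be maintained simultaneously.

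The paper avoids both issues by keeping the original (non-NNF) circuit and invoking a structural lemma. Alice first replaces her input $a$ by a minimal $a'\preceq a$ with $f(a')=1$, so that every $1$-bit of $a'$ is positive-sensitive; by \cref{cpath:gen-circuit}, each such bit $i$ has a path of $1$-valued gates from the leaf $x_i$ up to either the root or the input of some negation gate. Alice then traces these paths backward from the target gates, spending $\log c$ bits per step; the leaves reached are guaranteed to be \emph{positive} literals $x_i$ with $a'_i=1$, and Bob simply checks whether $b_i=0$. Since all visited gates fire on $a'$, the total cost is at most $\EC(C)\log c$.
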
 
It is known that for the perfect matching function of a graph on $n$ edges, denoted 
as $f_{PM}$, $\KW^+(f_{PM}) = \Omega(\sqrt{n})$~\cite{RW92}. Hence,~\cref{kw:lb} implies 
that any Boolean circuit with bounded fan-in, computing $f_{PM}$ will require 
energy at least $\Omega(\sqrt{n})$.

All the models considered so far are of fan-in $2$. We now relax this requirement 
and consider the energy complexity of unbounded fan-in constant depth Boolean circuits 
computing specific functions. In this direction, we show the following.

\begin{theorem}\label{lb:depththree:raz} 
Let $C$ be any unbounded fan-in Boolean circuit of depth $3$ computing the parity function 
on $n$ variables. Then, $\EC(C)$ is $\Omega(n)$.
\end{theorem}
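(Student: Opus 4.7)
The plan is to exploit the maximal sensitivity of parity together with the very rigid shape of a depth-3 circuit. Without loss of generality I assume $C$ is in $\Sigma_3$ form: $C = \bigvee_{j} A_j$ with each $A_j = \bigwedge_{i} D_{j,i}$, where each $D_{j,i}$ is an OR of input literals. I will handle the $\Pi_3$ case (top AND) separately at the end, and WLOG the top NOT case can be absorbed by taking $\neg C$.

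Pick any input $y$ with $\mathrm{parity}(y)=1$. Then $C(y)=1$, so the top OR fires and at least one middle AND $A_j$ fires. Because $A_j$ is a conjunction, every one of its clauses $D_{j,1},\dots,D_{j,t_j}$ also fires. Counting the top OR, $A_j$, and all $t_j$ clauses gives $\EC(C)\ge t_j+2$, so it suffices to prove $t_j\ge n$.

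For this I use sensitivity one flip at a time. For every $v\in[n]$, the input $y\oplus e_v$ has parity 0, so $C(y\oplus e_v)=0$ and thus $A_j(y\oplus e_v)=0$, forcing some clause $D_{j,i}$ that was $1$ at $y$ to drop to $0$ at $y\oplus e_v$. For each clause $D_{j,i}$, let $L_i(y)\subseteq[n]$ be the set of variables whose literal in $D_{j,i}$ evaluates to $1$ at $y$. The clause is satisfied at $y$ iff $L_i(y)\ne\emptyset$, and flipping $v$ kills every satisfied literal iff all of them live on $v$, that is, iff $L_i(y)=\{v\}$. Since $L_i(y)$ is fixed once $y$ is fixed, the association $v\mapsto(\text{clause with }L_i(y)=\{v\})$ is injective into the $t_j$ clauses of $A_j$, so $t_j\ge n$. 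This yields $\EC(C)\ge n+2$.

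The main subtlety is the $\Pi_3$ case $C=\bigwedge_j O_j$ with $O_j=\bigvee_i T_{j,i}$. Straight dualization does not help: the gate-by-gate correspondence between $C$ and its De Morgan dual swaps firing and non-firing gates, so a lower bound on the energy of the dual gives an \emph{upper} bound on the minimum firing count of $C$, not a lower bound on the maximum. Instead I would argue directly: choose a parity-0 input $y$; since $C(y)=0$ some $O_{j_0}(y)=0$, and every flip $y\oplus e_v$ has parity $1$, forcing $O_{j_0}(y\oplus e_v)=1$. The term-level analogue of the clause argument above (the set of variables whose flip turns a given falsified term into a satisfied one has size at most $1$) shows $O_{j_0}$ has at least $n$ distinct terms $T_v$. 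The nontrivial remaining step is converting this cardinality bound on the terms of one middle OR into a bound on the number of gates firing at some \emph{single} input: one must combine the fact that at a parity-1 input every middle OR fires and contributes at least one firing term with a width/fanout tradeoff on the $T_v$'s to ensure that whenever the number of middle ORs is sub-linear the firing terms in some $O_j$ at some input must total $\Omega(n)$. This balancing argument is the main obstacle and is where I expect the technical work to concentrate.
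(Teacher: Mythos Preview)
Your $\Sigma_3$ argument is clean and essentially correct: once some middle $\land$-gate $A_j$ accepts, full sensitivity of parity forces a distinct child of $A_j$ to be uniquely satisfied by each variable, so $A_j$ has $\ge n$ children and almost all of them are genuine gates that fire. (One small fix: a child that is a bare positive literal $x_v$ is not a gate, so choose $y$ of Hamming weight $1$ to make all but one of the special children be either an OR gate or a firing negation, still yielding $\EC(C)\ge n+1$.)

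The proposal has two real gaps, though. First, the reduction to $\Sigma_3$ or $\Pi_3$ form is \emph{not} without loss of generality for energy. In this theorem ``depth $3$'' counts only the $\land/\lor$ layers; negations may sit between any two layers, and pushing them to the leaves via De Morgan changes the circuit and its energy. In particular, even with a top $\lor$, a firing child of the root may be $\neg B$ for some depth-$2$ gate $B$, which throws you straight into the $\Pi$-type situation you could not finish. Your ``top NOT can be absorbed'' remark is fine for a single outermost negation, but it does not cover negations between the bottom and middle layers or between the middle and top layers. Second, as you acknowledge, the $\Pi_3$ case is left open: you correctly extract $n$ terms inside one middle $\lor$, but those terms are all $0$ at the witnessing input, and the sketched ``width/fanout balancing'' to convert that cardinality into many simultaneously firing gates is not supplied and does not obviously go through.

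The paper sidesteps both issues with a very different argument. It does not use sensitivity at all. Instead it counts the leaf negations and the bottom-layer gates that feed into negations; each such gate (or its negation) contributes $1$ to the energy on every input. It then \emph{restricts} one input variable per such gate to kill all these negations, leaving a depth-$3$ circuit on $m$ variables whose bottom and middle layers are monotone and which still computes $\oplus_m$. By Razborov/H{\aa}stad that residual circuit has size $2^{\Omega(m^{1/12})}$, and on the all-ones extension every one of those monotone gates fires. So $\EC(C)\ge (n-m)+2^{\Omega(m^{1/12})}$, which is $\Omega(n)$ regardless of $m$. This approach handles arbitrary placement of negations uniformly and never needs the $\Sigma_3/\Pi_3$ case split; the price is invoking the $\AC^0$ parity lower bound rather than the elementary combinatorics of your $\Sigma_3$ case.
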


Finally, we show lower bounds on the energy complexity of Boolean functions 
when restricted to Boolean formulas (instead of Boolean circuits), in terms of its formula size and 
depth.

For a formula $F$, let $\L(F)$ be the number of leaves in $F$ and $\depth(F)$ be 
the length of the longest path from root to any leaf in $F$. For a Boolean function 
$f$, let $\L(f)$ be the minimum $\L(F)$ among all the formulas $F$ computing $f$. 
Let $\fEC(f)$  be 
the minimum energy for any bounded fan-in formula computing $f$. 
Intuitively, Boolean formulas can take more energy than a Boolean circuit since we cannot 
``reuse" computation. Also, for any formula $F$, $\fEC(F)\le \L(F)-1$. Hence, it 
would not be surprising if $\fEC(F)$ is also lower bounded by $\Omega(\L(F))$ giving 
a tight bound of $\fEC(f) = \Theta(\L(f))$. Towards this direction we show the 
following result.

\begin{theorem}\label{energy:formula}
For a Boolean function $f$, computed by a formula $F$, $$ \fEC(F) = \Omega 
\left(\sqrt{\L(F)}-\depth(F)\right ).$$
\end{theorem}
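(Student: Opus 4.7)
The plan is to prove the equivalent contrapositive statement: for any bounded fan-in formula $F$ with $\L(F)$ leaves and depth $d = \depth(F)$, we have $\L(F) = O((\fEC(F) + d)^2)$, i.e., $\fEC(F) = \Omega(\sqrt{\L(F)} - d)$. I would proceed by structural induction on $F$.

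First, I would handle the base case (single leaves and single-gate formulas) directly, and the unary NOT-root case. If $F = \lnot F_1$, then $\L(F) = \L(F_1)$, $\depth(F) = \depth(F_1) + 1$, and $\fEC(F) \ge \fEC(F_1)$ since the NOT gate may only add to the energy. Applying the inductive hypothesis to $F_1$ yields $\fEC(F) \ge \sqrt{\L(F_1)} - \depth(F_1) = \sqrt{\L(F)} - \depth(F) + 1$, which is enough.

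For a binary root $F = F_L \text{ op } F_R$, my strategy is to combine the sub-formula energies. In the clean case where $F_L$ and $F_R$ are on disjoint sets of variables, we can independently choose inputs to maximize each sub-energy, so $\fEC(F) \ge \fEC(F_L) + \fEC(F_R)$. By the inductive hypothesis and the inequality $\sqrt{a} + \sqrt{b} \ge \sqrt{a + b}$, combined with $\depth(F) = 1 + \max(d_L, d_R)$, this yields $\fEC(F) = \Omega(\sqrt{\L(F)} - \depth(F))$, since the additive depth penalty of $d_L + d_R \le 2(d-1)$ is absorbed by the $\Omega(\cdot)$ notation in the regime where $\sqrt{\L} \gg d$ (and the claim is trivial otherwise).

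The challenging case is when $F_L$ and $F_R$ share variables: the max-energy inputs of the two sub-formulas may be incompatible, so $\fEC(F)$ might only be $\max(\fEC(F_L), \fEC(F_R))$ rather than their sum. To handle this, I would observe that any ``always-zero'' sub-formula (the obstruction for combining energies) over $\{\land, \lor, \lnot\}$ must necessarily use NOT gates, and these NOT gates must fire on some complementary input, contributing to $\fEC$. One would then formalize this via an amortized accounting: either the chosen input makes both $F_L$ and $F_R$ contribute their max energy simultaneously, or we switch to an input that fires enough NOT gates within the shared-variable region to compensate.

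The main obstacle is precisely the shared-variable case. The key technical step is to show that incompatibility between the max-energy inputs of $F_L$ and $F_R$ is itself ``witnessed'' by firing gates (in particular, NOT gates) on some suitable input, preventing $\fEC(F)$ from collapsing to $\max(\fEC(F_L),\fEC(F_R))$. A cleaner route may be to first normalize $F$ by pushing negations toward the leaves (at a constant-factor cost in depth and leaf count) and then apply an averaging argument over a carefully chosen input distribution, bounding the expected energy from below using structural properties of de Morgan formulas; since $\fEC(F)$ dominates any such expectation, the desired lower bound follows.
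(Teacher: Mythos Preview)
Your proposal has a genuine gap: the shared-variable case is not a corner case but the \emph{generic} case for Boolean formulas, and neither of your suggested fixes is substantiated. The claim that ``incompatibility between the max-energy inputs of $F_L$ and $F_R$ is witnessed by firing NOT gates'' is not established; there is no clear mechanism linking a mismatch of maximizing inputs to a lower bound on the number of negations that fire on some single input. Your alternative route (push negations to the leaves, then average) is precisely the warm-up argument the paper tries first and explicitly abandons: for formulas with long skew chains, a uniformly random input does not give a constant probability of firing per gate, so the expected energy can be $o(\L(F))$. Moreover, pushing negations around produces a \emph{different} formula $F''$, and a lower bound on $\fEC(F'')$ says nothing about $\fEC(F)$.

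Even in your ``clean'' disjoint-variable case the induction does not close as written. From $\fEC(F_L)\ge c(\sqrt{\L_L}-d_L)$ and $\fEC(F_R)\ge c(\sqrt{\L_R}-d_R)$ you get $\fEC(F)\ge c(\sqrt{\L}-d_L-d_R)$, but you need $\ge c(\sqrt{\L}-d)$ with $d=1+\max(d_L,d_R)$; since $d_L+d_R$ can be $2(d-1)$, you lose a factor at every level. Saying this is ``absorbed by the $\Omega$'' is circular: the hidden constant must be fixed before the induction, not adjusted at each step.

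The paper's proof takes a completely different route, avoiding structural induction on $F$ altogether. It invokes the Guo--Komargodski decomposition to rewrite $F$ as a formula $F'=H(G_1,\dots,G_T)$ where $H$ is read-once, the $G_i$ are monotone, $T\le 5\,\negs(F)-2$, and $\L(F')\le 2\L(F)$. Two facts about $F'$ are then proved: (i) tracing the construction, $\fEC(F')\le (5\negs(F)-2)(\fEC(F)+\depth(F)+1)$; and (ii) since each $G_i$ is monotone, on input $1^n$ every internal gate of every $G_i$ fires, giving $\fEC(F')\ge \L(F)-(5\negs(F)-2)$. Combining (i) and (ii) yields $\fEC(F)\ge \L(F)/(5\negs(F)-2)-\depth(F)-2$, and together with the easy bound $\fEC(F)\ge\negs(F)$ one balances to obtain $\fEC(F)=\Omega(\sqrt{\L(F)}-\depth(F))$. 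The key idea you are missing is that \emph{negations} are the right handle: either there are many of them (and energy is large trivially), or there are few and the formula is close to monotone, in which case the all-ones input forces almost every gate to fire.
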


\noindent
{\large\textbf{Related work:}} 
We discuss recent results on energy complexity of computing Boolean 
functions in various circuit models.

Observe that any Boolean circuit is also a threshold circuit since each of gates in $\calB$, $\land$, $\lor$ and $\neg$ can be implemented by threshold gates.  This implies that, $\EC(f) \ge 
\EC_{\calT}(f)$. Hence, for a function $f$, known lower bound on $\EC_{\calT}(f)$ 
translates to a lower bound on $\EC(f)$. In this context,  \cref{tab:tradeoff} 
summarizes known results on bounds on energy complexity of threshold circuits in 
terms of the parameters size, depth and fan-in for certain classes of Boolean 
functions.  For designing energy efficient circuits, techniques or tools to reduce 
the energy complexity of circuits is relevant in this context. \cref{tab:transform} 
summarizes known results on energy complexity of Boolean 
functions on ways to transform circuits to energy efficient ones.

\paragraph{Energy vs circuit parameters:}
\begin{table}[htp!]
\centering
\begin{tabular}{@{}l|c|c|lcl@{}}
\toprule
{\tt Param}   & {\tt Function $f$ is }\ldots & {\tt Gate} & Trade-off                                                                                             & Ref. \\ \midrule
$\ell$          & Symmetric     & any          & $\ell \ge \frac{n-b_f}{e}$   & \cite{SUZ13}     \\
$s$           & Symmetric     & Unate        & $s^e \ge \frac{n+1-a_f}{b_f} $                                                       & \cite{UTN11}      \\
$d, s$ & any           & Threshold    & $R_{0.5-\delta}(f) = O(e^d \log s)$,  &       \cite{UT08}   \\ 
& & & $\delta = \frac{1}{s^{O(e^d)}}$  & \\
\bottomrule
\end{tabular}
\caption{Known bounds on energy $e$ of circuits computing Boolean functions}
\label{tab:tradeoff}
\end{table}

 \cref{tab:tradeoff} presents the information: {\it ``Energy vs Parameter (\texttt{Param}) trade-off for 
any circuit $C$ using specific type of gates (\texttt{Gate}) computing the  function $f$''}. The 
parameters involved are $s= \size(f)$, $d = \depth(f) $, $\ell$ =  fanin of gates 
in $C$ and $e$ is the optimum energy of a circuit with gates of type {\tt Gate} 
computing $f$.

By $R_{\delta}(f)$, we denote the two-sided error public coin 
randomized communication complexity of $f$ with error probability $\delta$.
We now describe the two notations $a_f$ and $b_f$ used in first two entries of~\cref{tab:tradeoff} for a symmetric function $f$.
Observe that any symmetric function $f:\zon \to \zo$ can be 
completely described by an $n+1$ length Boolean vector -- $v_f$ as $f(x) = v_f(|x|)$ for all $x\in \zon$ where $|x|$ is the number of ones in $x$. For $b \in \zo$, let $g_b$ is maximum length of consecutive $b$'s in $v_f$. Then,  $a_f \defn \min\set{g_0,g_1}$ and $b_f \defn \max\set{g_0,g_1}$. The first two entries of~\cref{tab:tradeoff} gives an energy fan-in trade-off and size energy trade-off, respectively, for symmetric functions in terms of $a_f$ and $b_f$. The third entry of~\cref{tab:tradeoff} gives an energy size trade-off for constant depth threshold circuits for any Boolean function.

\paragraph{Energy of circuits under change of basis}
\cref{tab:transform} presents the information: {\it
``Given a circuit $C$ with $\energy(C) = e$ of with gates of type $A$ then, there 
exists a circuit $C'$ with gates of type $B$ computing the same function as $C$ 
with bounds on $\size(C')$, $\depth(C')$, $\energy(C')$.''
}

\begin{table}[htp!]
\centering
\begin{tabular}{@{}llllll@{}}
\toprule
$A$ & $B$ & $\size(C')$ & $\depth(C')$ & $\energy(C')$ & Ref.\\
\midrule
Any & Threshold & $\le O((e+n)\size(C))$ & $O(e)$ &  -& \cite{UNT10} \\
Threshold/Unate & Threshold/Unate & $\le 2\cdot e \cdot \size(C)+1$ & $\le 2 \cdot e +1$ & $e$ & \cite{UNT10}\\
\bottomrule
\end{tabular}

\caption{Transforming circuit of type $A$ to type $B$}
\label{tab:transform}

\end{table}

\noindent 
\textbf{Organization of the paper.} The rest of the paper is organized as 
follows. We start with preliminaries in~\cref{sec:prelims}. We show new bounds on energy complexity in 
terms of the decision tree depth in \cref{sec:energy:dt}. Then, we show two methods to obtain lower bounds on energy complexity in \cref{sec:psens:lb} and \cref{sec:kw:lb} using the notion of \textit{continuous positive paths} (introduced in~\cref{sec:psens}). In \cref{sec:depththree}, we show energy lower bounds for depth $3$ Boolean circuits computing a specific function. Following 
this, in~\cref{sec:formula}, we show energy lower bounds for Boolean formulas. 
In~\cref{ec:lb:dt}, we compare of our lower bound techniques with a recent improvement due to Sun~\etal~\cite{SSWX18}. We conclude in~\cref{sec:conc} outlining some directions for further exploration. 

\section{Preliminaries}\label{sec:prelims}

A Boolean 
circuit $C$ over the basis $\calB=\set{ \land_2, \lor_2,\neg}$ is a directed 
acyclic graph (DAG) with a root node (of out-degree zero), input gates labeled by 
variables (of in-degree zero) and the non-input gates (inclusive of root) labeled 
by functions in $\calB$. Define the size 
to be the number of non-input gates and, depth to be the length of the longest 
path from root to any input gate of the circuit $C$ denoted, respectively, as $
\size(C)$ and $\depth(C)$. A Boolean formula is a Boolean 
circuit where the underlying DAG is a tree. We call a negation gate that takes  
input 
from a variable as a \textit{leaf negation}. A Boolean circuit is 
said to be \textit{monotone} if it does not use any negation gates. A function 
is monotone if it can be computed by a monotone circuit. Equivalently, a function  
$f$ is monotone if $\forall~x,y \in \zon$, $x \prec y \implies f(x) \le f(y)$ 
where $x \prec y$ iff $x_i \le y_i$ for all $i\in [n]$. For a Boolean circuit $C$, $
\negs(C)$ denotes the number of NOT gates in the circuit $C$. 
Fix an arbitrary ordering among the gates of $C$. A \textit{firing pattern} of a 
circuit $C$ on a given input is the binary string of evaluation of the gates on the 
input as per the fixed ordering. The \textit{number of firing patterns of a 
circuit} $C$ is the number of distinct firing patterns for $C$ over all inputs.

 For $i \in [n]$, let $e_i$ denote the $n$ length 
Boolean vector with the $i^{th}$ entry alone as $1$. For an $a \in \zon$, $a \oplus 
e_i$ denotes the input obtained by flipping the $i^{th}$ bit of $a$. 
The positive sensitivity of $f$ on $a$, denoted by $\psens(f,a)$, is the number of $i\in[n]$ such that $a_i=1$ and $f(a \oplus e_i) \ne f(a)$. We define $\psens(f)$ as $\max_{a \in 
\set{0,1}^n} \psens(f,a)$. 

For a monotone function $f:\zon \to \zo$, $x \in f^{-1}(1)$ and $y \in f^{-1}(0)$, 
define $S^+_f(x,y) = \set{ i \mid  x_i =1, y_i =0, i \in [n]}$. The monotone 
Karchmer-Wigderson cost of $f$ (denoted by $\KW^{+}(f)$) is the optimal 
communication cost of the problem where Alice has $x$, Bob has $y$ and they have 
to find an $i \in [n]$ such that $i \in S_f^+(x,y)$. It is known that $\KW^+(f)$ 
equals the minimum depth monotone Boolean circuit computing $f$. For more details 
about this model, see~\cite{textbook}.  We now define the tree function - Definition 3.1 of \cite{DS18}.
\begin{definition}[Tree Function] 
Let $\mathcal{F} = \set{f_k\mid k \in \mathbb{N}}$ be a family of Boolean functions where for every $k 
\in \mathbb{N}$,  $f_k:\zo^{2^k-1} \to \zo$ is defined by the decision
tree which is a full binary tree of depth $k$ with each of the $2^k-1$ internal node
querying a distinct variable and each of the nodes at level $k$ have left leaf child
labeled $0$ and right leaf child labeled $1$.
\end{definition}

\paragraph{\large Energy Complexity:}
For a Boolean circuit $C$ and an input $a$, the {energy complexity} of $C$ on the 
input $a$ (denoted by $\EC(C,a)$) is defined as the number of non-input gates that 
output a 1 in $C$ on the input $a$. Define the \textit{energy complexity} of $C$ 
(denoted by $\EC(C)$) as $\max_{a} \EC(C,a)$. The energy complexity of a function 
$f$, (denoted by $\EC(f)$) is the energy of the minimum energy circuit over the 
Boolean basis $\calB$ computing $f$. 

As mentioned in the 
introduction, Lozhkin and 
Shupletsov~\cite{LS15} showed that $\EC(f) \le 3n(1+\epsilon(n))$ by constructing a 
Boolean circuit of size $\frac{2^n}{n}(1+\epsilon(n))$ where $\epsilon(n) \to 0$ as $n \to \infty$. 
Their idea is to construct a Boolean 
circuit  of low energy that outputs all product terms on  $n$ variables where each 
of them appears exactly once in a negated or unnegated form. We call such terms as 
\textit{minterms}. We slightly improve the above 
bound using the same idea by constructing a Boolean circuit of size 
$2^{O(n)}$.

\begin{proposition} \label{ub:fanin-bounded}
For any $f:\zon \to \zo$, $\EC(f) \le 3n-1$.
\end{proposition}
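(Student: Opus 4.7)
The plan is to generate, with low energy, all $2^n$ minterms
$m_a(x) = \bigwedge_{i=1}^n \ell_i^a(x_i)$
(where $\ell_i^a = x_i$ if $a_i=1$ and $\ell_i^a = \neg x_i$ otherwise), and then OR together exactly those $m_a$ with $f(a)=1$. The key observation driving the low energy count is that on any fixed input $b$, only the single minterm $m_b$ evaluates to $1$, so the OR stage has at most one active leaf and only a single root-to-leaf path can fire through it.

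I would build the circuit in three parts. First, compute the $n$ leaf negations $\neg x_1,\ldots,\neg x_n$ once and share them throughout; on an input $b$ of Hamming weight $|b|$, exactly $n-|b|$ of them fire. Second, grow a minterm network layer by layer: layer $1$ needs no new gates, since its two partial minterms on $\{x_1\}$ are just $x_1$ and $\neg x_1$; and at each layer $k\in\{2,\ldots,n\}$, from every partial minterm $m$ of layer $k-1$ I create two new AND gates $m \land x_k$ and $m \land \neg x_k$, producing all $2^k$ partial minterms on $\{x_1,\ldots,x_k\}$. Since exactly one partial minterm fires per layer, layers $2,\ldots,n$ together contribute exactly $n-1$ firing AND gates on any input. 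Third, assemble a balanced binary tree of OR gates whose leaves are $\{m_a : f(a)=1\}$: on input $b$ at most one leaf of this tree is $1$, so at most $\lceil \log_2 2^n\rceil = n$ OR gates fire along the unique active path.

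Summing the three contributions gives energy at most $(n-|b|) + (n-1) + n \le 3n-1$ on every input, and the total size is $\sum_{k=2}^n 2^k + O(2^n) + n = 2^{O(n)}$ as promised. The only delicate piece of accounting is the ``minus one'', which comes from layer $1$ of the minterm tree being free because its two partial minterms are already present as $x_1$ and the leaf negation $\neg x_1$; the constant cases of $f$ are handled trivially and do not affect the bound.
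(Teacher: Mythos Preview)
Your proposal is correct and is essentially the same construction as the paper's: build all $2^n$ minterms layer by layer (the paper does this by induction, you describe it directly), then feed the relevant ones into a depth-$n$ OR tree, and count $(n-|b|)$ for negations, $n-1$ for the firing AND gates across layers $2,\ldots,n$, and at most $n$ for the single active path through the OR tree. The only cosmetic difference is that you track $n-|b|$ negations explicitly while the paper just uses the worst case $n$, but both yield the same $3n-1$ bound via the same idea.
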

\begin{proof}
We show that all minterms in $n$ variables can be computed by a Boolean circuit of energy 
at most $2n-1$. Assuming this, to compute $f$, construct an $\lor$ formula on $2^n$ 
inputs of depth $n$ and connect the minterms on which $f$ evaluates to $1$ as the leaves of $
\lor$ (and the rest of the inputs as $0$). Since on any input, exactly one 
of the leaves will evaluate to $1$, there is only $1$ path to the output gate where 
all $\land$ gates evaluate to $1$. Hence, the overall energy complexity is at most 
$2n-1+n = 3n-1$. We construct a Boolean circuit of energy $2n-1$ to compute all 
minterms on $n$ variables.

Proof is by induction on $n$. Let $x_1,\ldots, x_n$ be the variables. 
For $n=1$, the Boolean circuit is $x_1$, $\neg x_1$ which has energy $1$. Hence, the base 
case holds.

By induction, we have constructed a circuit $C$ (on $n$ inputs and having $2^n$ 
outputs) 
computing all $2^n$ minterms on $x_1,\ldots,x_n$. We modify the Boolean circuit as follows 
: branch out each output gate into two (left and right branch). Connect the left 
(resp. right) branch output to $x_{n+1}$ (resp. $\neg x_{n+1}$) by an $\land$
gate. Note that out of all $2^{n+1}$ outputs created this way, exactly one of them 
will output 1 on any input. Also we have computed all $2^{n+1}$ minterms on $x_1,
\ldots, x_{n+1}$. 
The resulting circuit has an energy of $2n-1$ for circuit $C$ by induction plus $2$ 
due to the output and the negation gate of $x_{n+1}$. Hence, the overall energy is 
$2n+1 = 2(n+1)-1$. This completes the induction.
\end{proof}

The upper bound in \cref{ub:fanin-bounded} has been improved by Sun~\etal~\cite{SSWX18} from $3n-1$ to $3n-2$. 

Observe that in a Boolean circuit $C$, for the leaf negation gates, there is always an 
input where all of them output a $1$. For the non-leaf negation gates, irrespective 
of the input, either the negation gate or its input gate  will output a one. Due to 
this reason, we have,
\begin{proposition}\label{energy:negs}
For any circuit $C$,  $\EC(C) \ge \negs(C)$.
\end{proposition}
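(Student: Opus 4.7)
The plan is to exhibit a single input on which at least $\negs(C)$ distinct non-input gates of $C$ fire, thereby lower bounding $\EC(C)$ by $\negs(C)$. I would first split the NOT gates of $C$ into the set $N_1$ of \emph{leaf negations}, whose input wire comes from an input variable, and the set $N_2$ of \emph{non-leaf negations}, whose input wire comes from another non-input gate, so that $\negs(C) = |N_1| + |N_2|$.

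For the leaf contribution, pick the input $a^\star$ that assigns $0$ to every variable feeding a leaf negation (other variables may take arbitrary values). Each $g \in N_1$ then computes $\neg 0 = 1$, so all $|N_1|$ leaf negations fire simultaneously on $a^\star$. For the non-leaf contribution, for every $g \in N_2$ with input gate $h_g$, the defining identity $g = \neg h_g$ forces exactly one of $g$ and $h_g$ to output $1$ on any input, in particular on $a^\star$. I would define the \emph{witness} $w(g)$ to be $g$ if $g$ fires on $a^\star$ and $h_g$ otherwise; each $w(g)$ is then a non-input gate firing on $a^\star$. Combining the witnesses with the leaf negations would give a collection of firing gates of cardinality at least $\negs(C)$, hence $\EC(C) \ge \EC(C, a^\star) \ge \negs(C)$.

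The main obstacle is verifying that this collection really has size $\negs(C)$, i.e., ruling out collisions among the witnesses and with the leaf negations. The problematic configurations are (i) a non-leaf negation $g$ whose input $h_g$ is itself a leaf negation (so $w(g) = h_g$ would already be counted in $N_1$), and (ii) two distinct $g, g' \in N_2$ that share an input $h_g = h_{g'}$. The natural way to deal with (i) is to reduce to circuits in which no NOT gate feeds another NOT gate: a $\neg\neg$ chain computes the identity and can be short-circuited to a wire, so it suffices to prove the bound for such simplified circuits. After this reduction the witnesses $w(g)$ for $g \in N_2$ are $\land$/$\lor$ gates and thus automatically disjoint from $N_1$.

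To settle (ii), I would refine the witness rule as follows: whenever $g$ itself fires, pick $w(g) = g$; whenever two $g, g' \in N_2$ share an input $h$, note that if $h$ does not fire then \emph{both} $g$ and $g'$ fire, contributing two distinct witnesses in $N_2$, while if $h$ does fire then one may assign $w(g) = w(g') = h$ and compensate by observing that no other $g'' \in N_2 \setminus \{g, g'\}$ shares this input (after merging identical wires) so the overall count of distinct firing gates remains at least $|N_1|+|N_2|$. Carrying through this accounting is the only delicate part; once done, the proposition follows directly.
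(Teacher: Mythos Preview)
Your strategy---choose an input $a^\star$ making every leaf negation fire, then for each non-leaf negation use ``either it or its predecessor fires''---is exactly the paper's two-sentence argument. You are right that collisions among the witnesses must be ruled out, a point the paper simply ignores; but your patches do not close the gap, and in fact none can, because the proposition is false as literally stated. Take the one-variable circuit with $g_1=\neg x$ and root $g_2=\neg g_1$: here $\negs(C)=2$, yet on either input exactly one of $g_1,g_2$ fires, so $\EC(C)=1<2$.

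Concretely, your reduction for case (i) breaks down on this very example: short-circuiting the $\neg\neg$ pair removes two negations but drops the energy by exactly one (on every input precisely one gate of the pair fires), so from $\EC(C')\ge\negs(C')$ you recover only $\EC(C)\ge\negs(C)-1$. Your accounting for (ii) is likewise off: when $g,g'\in N_2$ share a firing input $h$, setting $w(g)=w(g')=h$ produces one firing gate for two negations, and the remark that no third negation shares $h$ does nothing to make up the shortfall. The statement (and your witness argument) becomes correct once one assumes that no negation gate feeds another and that no two negation gates share an input---i.e., that $C$ has no trivially redundant negations---which is the missing hypothesis in both the paper's claim and your proof.
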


\paragraph{\large Model specific variants of energy complexity:} We now consider 
the notion of energy complexity for three other circuit models, namely monotone 
circuits, Boolean formulas and threshold circuits.

\noindent
\textbf{Energy Complexity and Monotone Boolean circuits:}
For a monotone Boolean function $f$, computed by a monotone Boolean circuit $C$, define $
\mEC(C)$ as the maximum over all the inputs the number of non-input gates that 
output a $1$. We define $\mEC(f)$ as $\min_C \mEC(C)$ where $C$ is a monotone 
circuit computing $f$. The following proposition gives an exact characterization 
for $\mEC(f)$.

\begin{proposition}\label{energy:mon}
For a monotone Boolean function $f$, let $\msize(f)$ denotes the size of the 
smallest monotone Boolean circuit computing $f$. Then, $\mEC(f) = \msize(f)$.
\end{proposition}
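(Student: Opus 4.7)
The plan is to prove the two inequalities $\mEC(f) \le \msize(f)$ and $\mEC(f) \ge \msize(f)$ separately, with the first being immediate and the second relying on a single structural observation about monotone circuits.

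For the upper bound $\mEC(f) \le \msize(f)$, I would simply pick any minimum-size monotone circuit $C^*$ computing $f$. On every input, each non-input gate of $C^*$ outputs at most one $1$, so $\EC(C^*,a) \le \size(C^*) = \msize(f)$ for all $a$. Taking the max over $a$ and then the min over monotone circuits gives the desired inequality.

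For the lower bound $\mEC(f) \ge \msize(f)$, the key observation I would use is that on the all-ones input, every gate of a monotone circuit fires. Precisely, for any monotone circuit $C$, I would show by induction on the depth of a gate $g$ that $g$ evaluates to $1$ on input $1^n$: the base case (input variables) is immediate, and the inductive step uses that both $\land_2$ and $\lor_2$ output $1$ whenever both of their inputs are $1$. Consequently $\EC(C, 1^n) = \size(C)$, so $\mEC(C) \ge \size(C)$ for every monotone circuit $C$ computing $f$. Minimizing over such circuits yields $\mEC(f) \ge \msize(f)$.

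The only subtlety is the edge case where $f$ is the constant $0$ function, for which there is no all-ones witness that $f(1^n)=1$; but if $f$ is a non-trivial monotone function then $f(1^n)=1$ automatically, and the constant cases can be dealt with separately (or excluded as degenerate, depending on the paper's convention for $\msize$ of a constant). I do not expect this step to present a real obstacle: the entire content of the proposition is the fact that monotone circuits cannot avoid firing on $1^n$, and once that is observed both inequalities fall out in one line each.
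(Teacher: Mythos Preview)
Your proposal is correct and follows essentially the same approach as the paper: bound $\mEC(f)$ above by the energy of a minimum-size monotone circuit, and bound it below by observing that every gate of any monotone circuit fires on the input $1^n$, so $\mEC(C) \ge \size(C) \ge \msize(f)$ for the minimum-energy monotone circuit $C$. The paper's proof is terser (it does not spell out the induction or the constant-function edge case), but the argument is the same.
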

\begin{proof}
Let $C$ be a monotone Boolean circuit of minimum size computing $f$. Clearly, $\mEC(f) \le 
\mEC(C) \le \msize(f)$. Also, for any monotone circuit $C'$ computing $f$, on the 
input $1^n$, all the gates in $C'$ output a $1$ implying $\mEC(C') \ge \mEC(C',1^n) 
= \msize(C')$. In particular, for the monotone circuit $C''$ of minimum energy computing $f
$, $\mEC(f) = \mEC(C'') \ge \size(C'') \ge \msize(f)$. Hence, $\mEC(f) = \msize(f)
$.
\end{proof}

\noindent
\textbf{Energy Complexity and Threshold circuits:}
Let $\calT$ be a basis consisting of all weighted threshold functions. A threshold 
circuit is a Boolean circuit where the gates are from the basis $\calT$.
Uchizawa \etal~\cite{UDM06} introduced the notion of energy complexity of threshold 
circuits denoted by $\EC_{\calT}(C)$, again defined as the worst energy of the 
threshold circuit $C$ among all the inputs. Define $\EC_{\calT}(f)$ as $\min_C 
\EC_{\calT}(C)$ where $C$ is a threshold circuit over the basis $\calT$ computing 
$f$.

A \textit{decision tree} is a rooted tree with all the non-leaf nodes labeled by 
variables and leaves labeled by a $0$ or $1$. Note that every assignment to the 
variable in the tree defines a unique path from root to leaf in the natural way. A 
Boolean function $f$ is said to be 
computed by a decision tree if for every input $a$, the path from root to a leaf 
guided by the input is labeled by $f(a)$. Depth of a decision tree is the length of 
the longest path from root to any leaf. Define decision tree depth of $f$ 
(denoted by $\DT(f)$) as the depth of the minimum depth decision tree computing $f
$. A threshold decision tree is similar to the decision tree except that queries at 
each non-leaf node can be an arbitrary threshold function on the input variables. 
We denote the depth of the minimum depth threshold decision tree computing $f$ by $
\DT_{th}(f)$.

For an $f: \zon \to \zo$,
 Uchizawa \etal~\cite{UDM06} introduced a measure of energy for threshold decision 
 tree $T$ computing $f$, denoted by $cost(T)$ defined as the maximum over all paths 
from root to leaf, the number of right turns taken in a path with cost of leaf defined to be zero. As mentioned in~\cite{UDM06}, $cost(T)$ can be seen as a measure of how often does a threshold gate in $T$ output a $1$.

As a part of their main result they showed, given any threshold decision tree $T$, 
(1) how to construct another threshold decision tree $T'$ with a bound on $cost(T')
$ (Lemma 2, 3 of~\cite{UDM06}) 
 and (2) how to obtain a low energy threshold circuit $C'$ computing $f$ from $T$ of energy $cost(T)$.
 (Lemma 5 of~\cite{UDM06}). 

This implies the following relation between $\EC_{\calT}(f)$ and $\DT_{th}(f)$.
\begin{proposition}\label{th-dtlb}
For any Boolean function $f$, $\EC_{\calT}(f) \le \DT_{th}(f)+1$.
\end{proposition}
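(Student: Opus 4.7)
The plan is to combine Lemmas 2, 3 and 5 of \cite{UDM06} directly. I would start with an optimal threshold decision tree $T$ of depth $d = \DT_{th}(f)$ computing $f$. Because the tree has depth $d$, any root-to-leaf path contains at most $d$ right turns, so $cost(T) \le d$ automatically; if the circuit-construction step below requires a slightly different shape of the tree, I would invoke Lemmas 2 and 3 of \cite{UDM06} to rewrite $T$ as $T'$ that still computes $f$ and for which $cost(T') \le d$.

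I would then feed $T'$ into the circuit construction of Lemma 5 of \cite{UDM06}. Intuitively, for each internal node $v$ of $T'$ this builds a single threshold gate $g_v$ that, on input $x$, fires precisely when the decision-tree computation on $x$ passes through $v$ and turns right there. The key idea is to bundle the entire prefix of threshold queries and their specified outcomes leading to $v$, together with the local query at $v$, into one weighted threshold predicate: using geometrically increasing weights, a violation of any earlier query overwhelms the local one, which keeps $g_v$ silent on any input whose computation path does not actually reach $v$. A single output OR gate (itself a threshold) collects the gates corresponding to accepting leaves, which contributes at most one additional firing and only on $1$-inputs.

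Putting the two steps together yields a threshold circuit for $f$ in which, on every input $x$, the firing internal gates are in bijection with the right turns along the computation path in $T'$, so the energy on $x$ is at most $cost(T', x) + 1$. Taking the maximum gives $\EC_{\calT}(f) \le cost(T') + 1 \le d + 1 = \DT_{th}(f) + 1$, as claimed.

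The main technical obstacle, and the reason one must invoke Lemmas 2 and 3 of \cite{UDM06} before Lemma 5, is showing that the accumulated ``reach $v$ and turn right'' predicate really can be realized by a \emph{single} threshold gate without spurious firings on inputs that branch off earlier in the tree; this is the delicate weight-balancing argument at the heart of their construction. It is also the place where we critically use that the basis $\calT$ permits arbitrary weights, so the argument does not transfer as is to the standard Boolean basis (which motivates the separate treatment in~\cref{energy:dt}).
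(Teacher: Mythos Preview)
Your proposal is correct and follows essentially the same route as the paper: start from an optimal threshold decision tree, bound its $cost$ by the depth, and apply Lemma~5 of \cite{UDM06} to obtain a threshold circuit of energy at most $cost+1$. In fact your direct observation that any depth-$d$ tree trivially satisfies $cost(T)\le d$ makes the detour through Lemmas~2 and~3 unnecessary; the paper instead routes through $cost(T')\le\log(\#\text{leaves of }T)\le \DT_{th}(f)$, which is a slightly longer path to the same bound. One small confusion to clean up: the weight-balancing you describe in your last paragraph is the content of Lemma~5 itself, not of Lemmas~2 and~3 (those merely rebalance the tree to lower $cost$), so your stated ``reason one must invoke Lemmas~2 and~3 before Lemma~5'' misattributes where the technical work lies.
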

\begin{proof}
Let $T$ be an optimum threshold decision tree computing $f$ with depth  $\DT_{th}
(f)$. We first state the results (1) and (2) formally. Result (1) says that $f$ can 
be computed by another threshold decision tree $T'$ with same depth as $T$, same 
number of leaves as $T$ and have $cost(T')\le \log (\# \text{ leaves of } T)$ 
(Lemma 2, 3 of~\cite{UDM06}). Result (2) says that there exists a threshold circuit 
$C$ computing $f$ with $\EC_{\calT}(C) \le cost(T)+1$ (Lemma 5 of~\cite{UDM06}).

Since $T$ has at most $2^{\DT_{th}(f)}$ many leaves, applying (1), we get a 
threshold decision tree $T'$ with $cost(T') \le \log (\#\text{ leaves of }T) \le 
\DT_{th}(f)$.  The result now follows by applying (2) to $T'$.
\end{proof}

\noindent
\textbf{Energy Complexity and Formulas:}
For a Boolean formula $F$, define $\fEC(F)$ is the worst case energy complexity of 
the formula $F$ over the Boolean basis $\calB$. We define,  $\fEC(f)$ as $\min_F 
\fEC(F)$ where $F$ is formula (over the Boolean basis $\calB$) computing $f$. See~
\cref{sec:formula} for more details.

In the rest of the sections, by \textit{circuits}, we refer to Boolean circuits over the Boolean basis $\calB$. 

\section{Energy Complexity and Decision Trees}\label{sec:energy:dt}
In this section, we show a new technique to obtain upper bounds 
on $\EC(f)$.

Recall that any $n$ bit function $f$ can be computed by a circuit 
of energy at most $3n-1$ (\cref{ub:fanin-bounded}). In this section, we identify 
the property of having low depth decision trees as a sufficient condition to 
guarantee energy efficient circuits.
More precisely, we show that for any Boolean function $f$, $\EC(f) \le O(\DT(f)^3)
$. 

One of the challenges in constructing a Boolean circuit is to use as few negation 
gates as possible. The reason is that non-leaf negation gates always contribute to 
the energy since either the 
gate or its input will always output a $1$ on any input to the circuit. We achieve 
this in our construction via an idea inspired by the \emph{connector circuit} 
introduced by Markov~\cite{Mar58}. Before describing the construction, we need the 
following result (\cref{markov:spl}) which helps in controlling the number of 
negation gates in our construction. 
\begin{lemma}\label{markov:spl}
Let $f_0$ and $f_1$ be any two Boolean functions on $n$ variables computed by 
Boolean circuits $C_0$ and $C_1$  respectively. Fix an $i \in [n]$. Define $f(x) = 
(\neg x_i \land f_0(x) ) \lor (x_i \land f_1(x))  $. Then, a circuit $C$ computing 
$f$  
can be obtained using $C_0$ and $C_1$ such that $\negs(C) = 1+\max\set{\negs(C_0), 
\negs(C_1)}$.\end{lemma}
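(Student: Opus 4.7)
The plan is to exploit Markov's connector-circuit idea to share negation gates between $C_0$ and $C_1$, rather than building $C$ as their disjoint union. A naive construction that wires up both circuits and multiplexes their outputs using $x_i$ would use $\negs(C_0)+\negs(C_1)+1$ negation gates; the goal is to bring this down to $1+\max\{\negs(C_0),\negs(C_1)\}$ by replacing matched pairs of NOT gates across the two circuits with a single shared NOT whose input is selected by $x_i$.

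Without loss of generality assume $k_0 := \negs(C_0) \le k := \negs(C_1)$. First, compute $\neg x_i$ once with a single new NOT gate (the only NOT in $C$ whose input is a literal). Fix arbitrary orderings $N_0^1,\ldots,N_0^{k_0}$ of the NOT gates in $C_0$ with inputs $a_1,\ldots,a_{k_0}$, and $N_1^1,\ldots,N_1^{k}$ of the NOT gates in $C_1$ with inputs $b_1,\ldots,b_k$. For each $j\le k_0$, build a selector
\[
 m_j = (\neg x_i \land a_j) \lor (x_i \land b_j),
\]
feed $m_j$ into a single \emph{shared} NOT gate that outputs $\neg m_j$, and reroute every downstream use of $\neg a_j$ inside $C_0$ and every downstream use of $\neg b_j$ inside $C_1$ to this common output, removing the original $N_0^j$ and $N_1^j$. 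For $j>k_0$, leave $N_1^j$ untouched. Finally, combine the (possibly garbled) outputs $f_0'$ and $f_1'$ of the modified copies of $C_0$ and $C_1$ via the multiplexer $(\neg x_i \land f_0') \lor (x_i \land f_1')$, reusing the already-computed $\neg x_i$.

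For correctness, observe that when $x_i=0$ each $m_j$ simplifies to $a_j$, so every shared NOT outputs $\neg a_j$; the modified copy of $C_0$ thus evaluates identically to $C_0$ on input $x$ and produces $f_0(x)$, which is exactly what the final multiplexer selects. Symmetrically, when $x_i=1$ each $m_j$ equals $b_j$, the shared NOTs output $\neg b_j$, the modified copy of $C_1$ produces $f_1(x)$, and that is selected by the multiplexer. Counting negations gives one for $\neg x_i$, $k_0$ shared NOTs, and $k-k_0$ untouched NOTs in $C_1$, a total of $1+k = 1+\max\{\negs(C_0),\negs(C_1)\}$, as required.

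The one subtlety to verify is acyclicity of the resulting DAG: each $a_j$ is a proper ancestor of $N_0^j$ inside $C_0$, each $b_j$ is a proper ancestor of $N_1^j$ inside $C_1$, and $\neg x_i$ is a fresh gate depending only on an input variable. Hence each $m_j$ can be placed topologically after the ancestor cones of $a_j$ and $b_j$, and its shared NOT just before the descendant cones of the removed $N_0^j$ and $N_1^j$; since the internal node sets of $C_0$ and $C_1$ are disjoint, no cycle is introduced and the result is a valid Boolean circuit over $\calB$.
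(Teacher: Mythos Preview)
Your overall strategy matches the paper's connector-circuit construction, but your acyclicity argument has a genuine gap: with \emph{arbitrary} orderings of the NOT gates in $C_0$ and $C_1$, the resulting graph need not be a DAG. Your justification appeals to the fact that the internal node sets of $C_0$ and $C_1$ are disjoint, but after the modification they are no longer independent---each shared NOT $\neg m_j$ feeds into \emph{both} circuits, so the ancestor cone of $a_j$ in the modified circuit can reach into $C_1$ via some other $\neg m_{j'}$, and vice versa.

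Concretely, suppose $k_0=k=2$. In $C_0$, let $a_2$ depend on the output of $N_0^1$ (so $N_0^1$ precedes $N_0^2$ topologically). In $C_1$, let $b_1$ depend on the output of $N_1^2$ (so $N_1^2$ precedes $N_1^1$ topologically). If you pair them as written---$(N_0^1,N_1^1)$ and $(N_0^2,N_1^2)$---then in the modified circuit $m_1$ depends on $b_1$, which depends on $\neg m_2$, which depends on $a_2$, which depends on $\neg m_1$: a cycle. So ``arbitrary orderings'' is too permissive.

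The paper avoids this by processing the negations iteratively and, at each step, choosing a gate $g_0$ in $C_0$ (resp.\ $g_1$ in $C_1$) that feeds a negation \emph{and whose sub-circuit does not depend on the output of any negation gate}. This is exactly the condition that the two NOT gates being merged are currently first in their respective topological orders, which rules out the cross-dependency above. Equivalently, your construction becomes correct if you require both orderings to be topological orderings of the NOT gates within $C_0$ and $C_1$; then $a_j$ depends only on $\neg m_1,\ldots,\neg m_{j-1}$ and similarly for $b_j$, so $m_j$ depends only on earlier $m_{j'}$ and the graph is acyclic. With that fix, your argument is essentially the paper's.
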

Note that the existence of the circuit in~\cref{markov:spl} can also  be argued using the 
result of Markov~\cite{Mar58} (see Section 10.2 of Jukna~\cite{juk12}) for an arbitrary $f$. However, the construction obtained by directly using the result of Markov can  potentially have high 
energy and hence is not suitable in our context. Since the Boolean function $f$ which we intent to compute is structured, 
we take advantage of 
this observation to adapt Markov's construction and obtain a low energy circuit (with minimal number of negation gates) in~\cref{markov:spl} which is then used 
to prove the 
main result of this section~(\cref{energy:dt}).
\begin{proof}[Proof of~\cref{markov:spl}]

 We start with the circuit $A=(\neg x_i \land 
C_0(x)) \lor (x_i \land C_1(x))$ which uses $1+\negs(C_0)+\negs(C_1)$ negations to 
compute $f$. If one of $\negs(C_0)$ or $\negs(C_1)$ is zero, then $A$ is the 
required circuit. Otherwise, we modify this circuit in $\min\set{\negs(C_0), 
\negs(C_1)}$ 
steps where, in each step, we reduce the number of negations by $1$ such that the 
resulting circuit computes $f$ correctly. 
Hence the resulting circuit $C$ has $1+\negs(C_0)+\negs(C_1) - \min\set{\negs(C_0), 
\negs(C_1)} = 1+\max\set{\negs(C_0), \negs(C_1)} $ negations.

We describe the modifications starting with $A$. Let $g_0$ be a gate in $C_0$ that 
feeds into a negation gate such that the function computed at $g_0$ does not depend 
on the output of any negation gate. Let $D_0$ be the sub-circuit rooted at $g_0$. 
Similarly, let $g_1$ be a gate in $C_1$ with the similar property and let $D_1$ be 
the sub-circuit rooted at $g_1$.
We remove the negation gates that $g_0$ and $g_1$ feeds into from $C_0$ and $C_1$ 
respectively and construct the \emph{connector circuit} (as shown in the box in~
\cref{markov:spl:diag}). We feed the output from the selector as output of the 
negation gates in $C_0$ and $C_1$. Let $D_0'$ (resp. $D_1'$) be the circuit $C_0$ 
(resp. $C_1$) with the output of selector circuit acting as output of the negation 
gate with the negation gate alone removed (Note that we do not completely 
disconnect the sub-circuits from the circuit. The wires connecting $D_0$ (resp. 
$D_1$) to $D_0'$ (resp. $D_1'$) are not shown in~\cref{markov:spl:diag} to avoid 
clutter).

When $x_i=0$, we claim that this circuit outputs $C_0(x)$. This is because when 
$x_i=0$, $D_0'$ gets $\neg D_0$ as output of $g_0$ correctly and hence computes 
$C_0(x)$ while the output of $D_1'$ is inhibited by the $\land$ gate which it feeds 
into. By a similar argument, this circuit computes $C_1(x)$ when $x_i=1$. Hence the 
resulting circuit indeed computes $f$ correctly.

Observe that the number of negation goes down by one in each step since we replace 
two negations by one. We repeat the previous steps restricted to gates in $D_0'$ 
and $D_1'$ as long as the negations in at least one of the circuits is exhausted. 
By the earlier argument, the final circuit $C$ correctly computes $f$.
\end{proof}
\begin{figure}[htp!]
    \begin{subfigure}[t]{.49\textwidth}
	\centering	
	\includegraphics[width=0.65\linewidth]{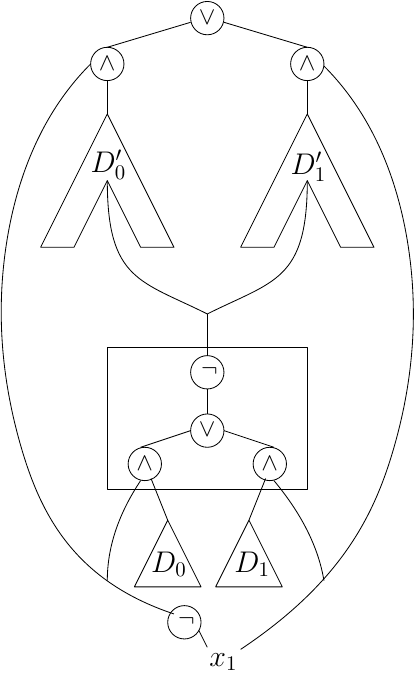}
    \caption{First step in the construction of $C$ in the proof of ~\cref{markov:spl}.}
    \label{markov:spl:diag}
    \end{subfigure}\hfill
    \begin{subfigure}[t]{.49\textwidth}
	\centering
	\includegraphics[width=0.80\linewidth]{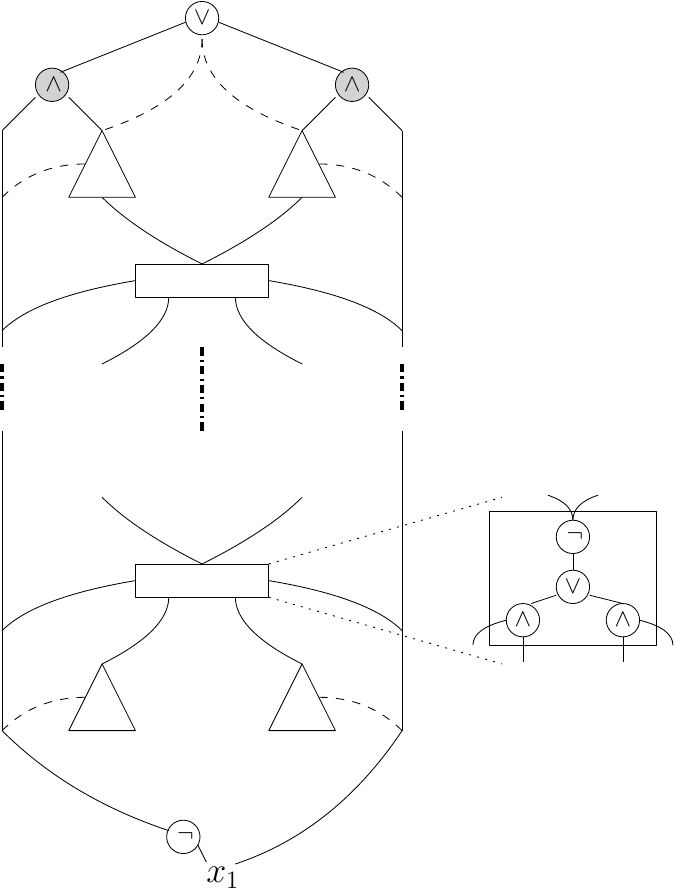}
 	\caption{Circuit $C'$ in the proof of~\cref{circuit:energy:dt}}
	\label{ckt:markov-mod}
	\end{subfigure}
	\caption{Energy efficient circuit construction}
\end{figure}

When $x_1=0$ (resp. $x_1=1$) the part of the circuit computing $f_1$ (resp. $f_0$) 
is not necessary in computing $f$. 
Having obtained a circuit construction which minimizes the usage of negations, we 
need a way to ``turn off'' such gates  that are not needed in computing $f$. In~
\cref{circuit:energy:dt}, we demonstrate how this is achieved, thereby saving 
energy, at the cost of increasing the fan-in of $\land$ gates.

\begin{lemma}\label{circuit:energy:dt}
For any non-constant Boolean function $f$, there exists a circuit $C$ computing $f$ 
with,
(1) all $\lor$ gates are of fan-in $2$ and all $\land$ gates are of fan-in at most 
$\DT(f)+2$, (2) no $\lor$ gate have a negation gate or a variable directly as its 
input, (3) $\negs(C) \le \DT(f)$ and, (4) $\EC(C) \le 2\DT(f)^2$.
\end{lemma}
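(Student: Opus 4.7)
I proceed by induction on $d = \DT(f)$, maintaining the invariant that the top gate of $C$ is always $\land$ or $\lor$ (so that it can safely feed into an outer $\lor$ without violating (2)). For the base case $d = 1$, $f$ is a literal and $C$ is a single fan-in-$1$ $\land$ gate applied to $x_i$ or to $\neg x_i$; the $\land$-wrapper guarantees the invariant, and all four conditions are immediate.

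For $d \ge 2$, let $T$ be an optimal depth-$d$ decision tree with root variable $x_i$ and subtree functions $f_0, f_1$ (each of DT-depth $\le d-1$). Applying IH to $f_0, f_1$ gives circuits $C_0, C_1$ satisfying (1)--(4) with parameter $d-1$ and with top gates of type $\land$ or $\lor$. I build $C$ in two stages. \emph{Stage 1 (Guarding):} produce $C_b^g$ from $C_b$ by inserting, as an additional input into every $\land$ gate of $C_b$, the literal $\neg x_i$ when $b=0$ and $x_i$ when $b=1$; then compute $C_0^g \lor C_1^g$ at a new top $\lor$ gate. Since guarding touches only $\land$ gates, the top of $C_b^g$ is still $\land$ or $\lor$, so the new top $\lor$ has $\land/\lor$ inputs as required by (2), and the new top is itself an $\lor$, preserving the invariant. \emph{Stage 2 (Markov reduction):} apply the connector construction from the proof of \cref{markov:spl} to drive the total negation count down from $\le 2(d-1)+1$ to $\le 1 + \max(\negs(C_0),\negs(C_1)) \le d$.

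Conditions (1) and (3) are then routine bookkeeping: each $\land$ gate in $C_b^g$ has fan-in at most $(d+1)+1 = d+2$, connector $\land$s are fan-in-$2$, and the negation bound is Markov's. For the energy bound (4), the key observation is that on an input with $x_i = 0$, every $\land$ in $C_1^g$ has $0$ as one of its inputs and outputs $0$; by condition (2), every $\lor$ in $C_1^g$ has only $\land$ or $\lor$ children, so these zeros cascade and $C_1^g$ contributes no energy. The active subcircuit $C_0^g$ fires exactly as $C_0$ would on this input, since the guard $\neg x_i = 1$ is a neutral $\land$-input, contributing at most $\EC(C_0) \le 2(d-1)^2$ by IH. The overhead -- the new top $\lor$, the $\neg x_i$ gate, and the at most $d-1$ Markov connector gadgets (each contributing at most $2$ to the energy, since its two internal $\land$s are gated by the mutually exclusive $x_i$ and $\neg x_i$) -- adds at most $2d$ more, giving $\EC(C) \le 2(d-1)^2 + 2d = 2d^2 - 2d + 2 \le 2d^2$. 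The case $x_i = 1$ is symmetric.

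The main obstacle I anticipate is the interaction between the two stages: Markov's connectors route signals back into both guarded subcircuits, and I need to verify that (a) any ``wrong'' signal sent into the inactive subcircuit cannot cause extra firings (the $\land$-guards absorb it, forcing outputs to $0$ regardless of the other inputs), and (b) conditions (1) and (2), together with the top-gate invariant, survive every connector insertion. Keeping each connector's energy contribution at $O(1)$ -- so that the total overhead is $O(d)$ with a constant small enough to close the induction at the exact bound $2d^2$ -- is the most delicate part of the accounting.
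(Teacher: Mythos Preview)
Your approach is essentially the paper's: induct on $\DT(f)$, use Markov's connector construction to bound negations, and guard every $\land$ gate in the $b$-subcircuit with the literal $x_i^b$ so that the inactive side goes dark. The paper applies Markov first and guards afterward, whereas you guard first and then apply Markov; the order does not matter for correctness, and your explicit top-gate invariant (via the fan-in-$1$ $\land$ wrapper at $d=1$) is a clean replacement for the paper's separate $d=2$ base case.

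Two small gaps remain. First, your energy accounting claims ``$C_1^g$ contributes no energy'' when $x_i=0$, but Markov's construction only removes $\min(\negs(C_0),\negs(C_1))$ negations from each side, so if $\negs(C_1)>\negs(C_0)$ the inactive side $C_1^g$ still contains up to $\negs(C_1)-\negs(C_0)\le d-1$ negation gates, and these can fire (their outputs are absorbed by guarded $\land$s, but the $\lnot$ gates themselves may evaluate to $1$). The paper tracks exactly this term (``the negations in $C_0'$'') and arrives at $2(d-1)^2+3(d-1)+2$; adding the missing $d-1$ to your $2(d-1)^2+2d$ gives the same expression, which is still $\le 2d^2$, so the induction closes. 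Second, your inductive step assumes both $f_0,f_1$ are non-constant so the IH applies, but an optimal decision tree can have a constant subtree even when $d\ge 2$; you need either a trivial side-case (drop the constant branch) or to extend the base case to $d=2$ as the paper does.
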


\begin{proof}
We describe the construction of the circuit by an induction on $\DT(f)$. 

\noindent
\textsf{Base Case: }For $f$ with $\DT(f) =1$, $f$ is either a variable or its 
negation and hence the 
trivial circuit satisfies (1) to (4). For $\DT(f) =2$, let $T$ be an optimal 
decision tree with $x_1$ as its root. Then, $f$ can be computed by the 
circuit $C = (\neg x_1 \land \ell_1) \lor (x_1 \land \ell_2)$ where $\ell_1, 
\ell_2$ could be a variable, negation of a variable or a constant. Also, if $C$ has 
$3$ negations, we apply~\cref{markov:spl} to get a circuit with two negations.
Hence, condition (3) is satisfied. In either cases, the conditions (1) and (2) are 
also satisfied and it can be verified that the energy of the circuit is at most $5 
\le 2\DT(f)^2$. Hence, condition (4) is also satisfied and the base case holds.

\noindent
\textsf{Inductive Step: }Let $f$ be a Boolean function computed by a decision tree 
$T$ of depth $\DT(f)\ge 3$. By induction, assume that for any Boolean function $g$ 
with $\DT(g) \le \DT(f)-1$ there exists a circuit $C$  computing $g$ satisfying (1) 
to (4).
Let the root variable of $T$ be $x_1$ and $T_0$ (resp. $T_1$) be the left (resp. 
right) subtree computing the function $f_0 = f|_{x_1=0}$ (resp. $f_1 = f|_{x_1=1}
$). Since $f_0$ 
and $f_1$ are computed by decision trees of depth $\DT(f)-1$, by induction, there 
exists 
circuits $C_0$ and $C_1$ computing $f_0$ and $f_1$, respectively, satisfying  (1) 
to (4).

Observe that $f(x) =  (\neg x_1 \land f_0) \lor (x_1 \land f_1) $. Hence by~
\cref{markov:spl}, there exists a circuit $C$ computing $f$ (\cref{ckt:markov-mod} omitting the thinly dashed lines) with $\negs(C) = 
\max\set{\negs(C_0) ,\negs(C_1)}+1$. We modify the circuit $C$ as follows: for each 
$\land$ gate which was originally in $C_0$ (resp. $C_1$), we add $\neg x_1$ (resp. $x_1$) 
as input thereby increasing its fan-in by $1$. We also remove the $\land$ gate (shaded in 
~\cref{ckt:markov-mod}) feeding into the top $\lor$ gate and feed the output of the circuits directly to 
the top $\lor$ gate (shown as dashed in~\cref{ckt:markov-mod}). Call the resulting circuit $C'$ and the 
gates from $C_0$ as $C_0'$ (the left part in~\cref{ckt:markov-mod}) and the gates from $C_1$ as $C_1'$ (the right part in~\cref{ckt:markov-mod}).

We first argue that the conditions (1) and (2) holds true for $C'$. We then argue that 
$C'$ correctly computes $f$ using which we argue (3) and (4)   thereby completing 
the induction.

We observe that the condition (1) holds since $\lor$ gate has fan-in $2$ by construction and 
$\land$ gate has fan-in at most $\max\set{\DT(f_0)+3,\DT(f_1)+3}$ which is at most $
\DT(f)+2$. The removal of the shaded $\land$ gates never causes a variable or a negation to be fed to the top $\lor$ gate since $f_0$ and $f_1$ have a decision tree depth of at least $2$ and hence the circuits of the respective functions have top gate as $\lor$ which is guaranteed by base case for depth $2$ and by induction otherwise. Hence, the condition (2) holds. We now argue that $C'$ correctly computes $f$. When $x_1=1$, all the $\land$ gates in $C_0'$ 
evaluate to $0$. Since no input variable or negation gate feeds into any $\lor$ gate in 
$C_0'$ (condition (2)), all the $\lor$ gates and $\land$ gates output $0$ irrespective of 
the remaining input bits. Hence, $C_0'$ outputs $0$. Since $x_1=1$, $C_1'$ behaves 
exactly same as $C_1$. By~\cref{markov:spl}, the circuit $C_1$ correctly computes $f$ 
when $x_1=1$. Hence, the circuit $C'$ correctly computes $f$ for $x_1=1$. The same 
argument with $C_0$ and $C_1$ interchanged can be used to show that $C'$ correctly 
computes $f$ with $x_1=0$.

Since no new negations were added in $C'$, $\negs(C') = \negs(C)$ which, by~
\cref{markov:spl}, equals $\max\set{\negs(C_0), \negs(C_1)} + 1 \le \max\set{\DT(f_1), 
\DT(f_2)}+1 \le \DT(f)$. Hence the condition (3) holds. We show that condition (4) holds 
for $C'$. Let $x$ be an input with $x_1=1$. We have already argued that when $x_1=1$, none of the $\land$ or $\lor$ gates of $C_0'$ output a $1$. Hence the gates that can output a $1$ in $C'$ are the 
negations in $C_0'$, the gates that output $1$ in $C_1'$, the selector gates (in the 
construction of~\cref{markov:spl}), the root gate and the negation gate for $x_1$ (recall that the shaded $\land$ gates are removed). 
Observe that the negations in $C_0'$ is at most $\negs(C_0)$ and 
$C_1'$ behaves exactly as $C_1$ for $x_1=1$. Also, the number of selector circuits used 
in~\cref{markov:spl} is at most $\max\set{\negs(C_0), \negs(C_1)}$.\footnote{While this quantity should be the minimum of the negations of $C_1$ and $C_2$, as seen in the proof of~\cref{markov:spl}, we upper bound this by the maximum of negations.} In addition, each such circuit 
can have at most $2$ gates that output $1$ on any input (see~\cref{ckt:markov-mod}). Hence, 
$\EC(C',x) \le \alpha_0 = \negs(C_0) + \EC(C_1) + 2\max\set{\negs(C_0), \negs(C_1)} + 2$. For $x$ with $x_1=0$, by 
a similar argument, $\EC(C',x) \le \alpha_1 = \negs(C_1) + \EC(C_0) + 2\max\set{\negs(C_0), 
\negs(C_1)} + 2$. Hence, 
$\EC(C')  \le \max\set{\alpha_0,\alpha_1}$ which is at most $ \max\set{\EC(C_0) , \EC(C_1)} +  3\max\set{\negs(C_0) , \negs(C_1)} + 2$. By induction, we have $\EC(C') \le  2(\DT(f)-1)^2+3(\DT(f)-1)+2$ which implies $\EC(C') \le 2\DT(f)^2$ as $f$ is non-constant. This completes the induction.
\end{proof}

We prove the main result of this section.
\begin{reptheorem}{energy:dt}
For any Boolean function $f$, $\EC(f) \le O(\DT(f)^3)$.
\end{reptheorem}
\begin{proof}
If $f$ is constant, the result holds. Otherwise, applying~\cref{circuit:energy:dt}, we have a circuit $C'$ computing $f$ with fan-in of $\lor$ gate being $2$ and 
fan-in of $\land$ gate being at most $\DT(f)+2$ of energy at most $2\DT(f)^2$. Without loss of generality, let $x_1$ be the variable at the root of the decision tree. By construction, all the unbounded fan-in $\land$ gates of the circuit $C'$ have $x_1$ or $\neg x_1$ as an input.

To obtain a bounded fan-in circuit from $C'$, we replace each of the $\land$ gates by a fan-in $2$ circuit as follows. 
Let $g$ be a $\land$ gate of the circuit $C'$ of fan-in $c \le \DT(f)+2$ which takes in $\ell \in \set{x_1, \neg x_1}$ as one of its input. We replace $g$ 
by a tree of fan-in $2$ $\land$ gates of $c$ leaves and of depth $c-1$ with $\ell$ as a leaf at depth $c-1$ as shown in~\cref{fig:fanin}.

\begin{figure}[htp!]
    \begin{subfigure}[t]{.49\textwidth}
	\centering	
	\begin{tikzpicture}[scale=0.8, every node/.style={transform shape}]
	\node[draw,circle] (n1) {$\land$};
	\node[below left=0.5cm of n1] (v1) {};
	\node[below =0.5cm of n1, xshift=-0.25cm] (v2) {};
	\node[below =0.5cm of n1, xshift=0.25cm] (v3) {$\ell$};
	\node[below right=0.5cm of n1] (v4) {};

	\path[draw] (n1) -- (v1);
	\path[draw] (n1) -- (v2);
	\path[draw] (n1) -- (v3);
	\path[draw] (n1) -- (v4);
	\end{tikzpicture}
    \caption{$\land$ gate of fan-in $c$ }
    \end{subfigure}\hfill
    \begin{subfigure}[t]{.49\textwidth}
	\centering
	\begin{tikzpicture}[scale=0.8, every node/.style={transform shape}]
	\node[draw,circle] (n1) {$\land$};
	\node[below right of=n1] (rn1) {};
	\node[draw,circle,below left of=n1] (n2) {$\land$};
	\node[below right of=n2] (rn2) {};
	\node[draw,circle,below left of=n2] (n3) {$\land$};
	\node[below right of=n3] (rn3) {};
	\node[below left of=n3] (n4) {$\ell$};
	\path[draw] (n1) -- (n2) -- (n3) -- (n4);
	\path[draw] (n1) -- (rn1);
	\path[draw] (n2) -- (rn2);
	\path[draw] (n3) -- (rn3);
	\end{tikzpicture}
 	\caption{Tree of depth $c-1$}
	\end{subfigure}
	\caption{Handling $\land$ gates of large fan-in $c$ (for $c=4$)}
	\label{fig:fanin}
\end{figure}

We now argue that this replacement with $\ell$ reattached as the leftmost leaf can only increase the overall energy by a factor of at most $c-1$. 
Consider an input for which $\ell=0$. Then, irrespective of the values of other $c-1$ inputs, none of the fan-in $2$ $\land$ gates output a $1$ as $\ell$ forces all $\land$ gates to output $0$. On the other hand if $\ell=1$, then the added gates can contribute an energy of at most $c-1$. Hence, for any input, the $\land$ gates in $C'$ that output a $0$ does not contribute any energy and those that output a $1$ can contribute of an energy of at most $c-1 \le \DT(F)+1$. Since, in the worst case, all the gates that output a $1$  can be an $\land$, $$\EC(f) \le \EC(C) \cdot (\DT(f)+1) \le 2\DT(f)^2 \cdot (\DT(f)+1) = O(\DT(f)^3)$$ 
\end{proof}

\section{Lower Bounds on Energy Complexity}
In this section, we introduce new methods to show lower bounds on energy 
complexity of Boolean functions. We introduce the notion of continuous 
positive paths (\cref{sec:psens}) using which we prove two energy lower 
bounds. Firstly, we show that the positive sensitivity of a function is a 
lower bound on its energy complexity (\cref{psens:lb}). Secondly, we show 
that for monotone Boolean functions, the cost of the monotone Karchmer-
Wigderson game for the function is a lower bound on its energy complexity 
(\cref{sec:kw:lb}). We conclude the section by proving an  energy lower bound 
of $\Omega(n)$ for any depth $3$ unbounded fan-in circuit computing parity 
function on $n$ bits (\cref{sec:depththree}).

\subsection{Energy Lower Bounds from Positive Sensitivity}\label{sec:psens:lb}
In this section, we prove \cref{psens:energy} from the introduction. 
We first describe an outline here. As a starting case, consider a monotone circuit $C$ computing $f$ evaluates to $1$ on an input $a\in 
\zon$. Let $i \in [n]$ 
be such that $a_i=1$ and flipping $a_i$ to $0$ causes the circuit to evaluate to $0$. We show that for such an index $i$
on input $a$, there is a path from $x_i$ to the root such that all the gates in the path outputs a $1$. The latter already implies a weak energy lower bound. We then generalize this idea to non-monotone circuits as well and use it to prove energy lower bounds. This generalization also helps us to prove upper bounds for $\KW^{+}$ games in~\cref{sec:kw:lb}.

To keep track of all input indices that are sensitive in the above sense, we introduce the measure of positive sensitivity denoted by $\psens(f)$ (as defined in~\cref{sec:prelims}). 
For example, the functions $f \in \set{\oplus_n, \land_n}$ have $\psens(f) = n$ while $\psens(\lor_n) = 1$. Let $\psenss(f,a)$ denote the set of positive 
sensitive indices on $a$.
 
\subsubsection{Continuous Positive Paths}\label{sec:psens}

Let $C$ be a Boolean circuit computing $f:\zon \to \zo$. For an input $a \in \zon$, we call a path of gates such that every gate in the path output $1$ on $a$ as a \emph{continuous positive path} in $C$.

 Fix an 
$a \in \zon$. We argue that for  every positive sensitive 
index $i$ on $a$, either there is a continuous positive 
path from $x_i$ to the root or it must be broken by a 
negation gate of the circuit. Using this we show, in 
the next section, that energy complexity of a function is 
lower bounded by its positive sensitivity.
\begin{lemma}\label{cpath:gen-circuit}
Let $f:\zon \to \zo$ and $a \in \zon$ be an input such 
that $\psens(f,a) \ne 0$ and $i \in \psenss(f,a)$. Let $C$ be any 
circuit computing $f$. Then,  either (1) there is a continuous positive path from $x_i$ to root or (2) $x_i$ directly feeds into a negation gate or (3) there 
is a continuous positive path from $x_i$ to a gate which feeds into a negation gate 
of $C$.
\end{lemma}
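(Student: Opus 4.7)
The plan is to track which gates of $C$ have their values changed when the bit $a_i$ is toggled. Since $i \in \psenss(f,a)$, we have $a_i=1$ and $f(a)\ne f(a\oplus e_i)$. Call a gate $g$ of $C$ \emph{sensitive} if $g(a) \ne g(a\oplus e_i)$. By assumption the output gate is sensitive, and among the input variables only $x_i$ is sensitive (all other coordinates of $a$ and $a\oplus e_i$ agree).

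First I would show that there is a directed path of sensitive gates from $x_i$ to the root. The key observation is: if $g$ is a sensitive non-input gate, at least one of its inputs must be sensitive, because $g$'s value is determined by its inputs, so if all inputs agreed on $a$ and $a\oplus e_i$ then so would $g$. Starting from the root (sensitive) and repeatedly stepping to a sensitive input, the backward walk must terminate at a sensitive variable, and hence at $x_i$. Reversing gives a directed path $x_i = g_0, g_1, \ldots, g_m = $ root in which every $g_j$ is sensitive.

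Next I would walk forward from $x_i$ along this path and maintain the invariant $g_j(a)=1$. The base case is $g_0=x_i$ with $a_i=1$. For the inductive step, assume $g_j(a)=1$, so $g_j(a\oplus e_i)=0$ by sensitivity of $g_j$, and examine $g_{j+1}$: (i) if $g_{j+1}$ is a $\lnot$ gate then, since its unique input is $g_j$, the path $g_0,\ldots,g_j$ is a continuous positive path ending at a gate that feeds into a negation, giving case~(2); (ii) if $g_{j+1}$ is $\lor_2$, then having $g_j=1$ as an input forces $g_{j+1}(a)=1$; (iii) if $g_{j+1}$ is $\land_2$, then on $a\oplus e_i$ the input $g_j=0$ forces $g_{j+1}(a\oplus e_i)=0$, and combined with sensitivity of $g_{j+1}$ this forces $g_{j+1}(a)=1$. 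In cases (ii) and (iii) the invariant carries to $j+1$.

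If the forward walk reaches $g_m$ without meeting a $\lnot$ gate, the entire path $g_0,\ldots,g_m$ is a continuous positive path from $x_i$ to the root, giving case~(1); otherwise the first $\lnot$-gate encountered witnesses case~(2). The one step that requires care is the $\land_2$ case, where one must cross-reference the behaviour of $g_j$ on both $a$ and $a\oplus e_i$ with the sensitivity of $g_{j+1}$ to rule out $g_{j+1}(a)=0$; everything else is a direct traversal.
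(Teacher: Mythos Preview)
Your proof is correct and takes a genuinely different route from the paper's. The paper argues by induction on $\negs(C)$: the monotone base case is handled by a contradiction argument (if every path from $x_i$ to the root passed through a gate evaluating to $0$, then flipping $x_i$ could not change the output), and the inductive step peels off the topologically first negation, casing on whether the gate feeding it is itself sensitive to the flip of $x_i$. Your argument is more direct: you first extract a single path of \emph{sensitive} gates from $x_i$ to the root by a backward walk, then sweep forward along that path showing the value on $a$ stays $1$ until a $\lnot$ gate is encountered. This avoids the induction on negations altogether and makes the one genuinely delicate step --- the $\land_2$ case, where sensitivity of $g_{j+1}$ together with $g_j(a\oplus e_i)=0$ forces $g_{j+1}(a)=1$ --- completely explicit. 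Both arguments in fact use only that $C$ agrees with $f$ on $a$ and on $a\oplus e_i$; your version is shorter and arguably cleaner, while the paper's inductive structure has the minor benefit of isolating the monotone case as a standalone fact that it can reuse.
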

\begin{proof}
It suffices to prove the following stronger statement: for a Boolean function $f$ and an $a \in \zon$ with $\psens(f,a) \ne 0$ and $i \in \psenss(f,a)$, let $C$ be any circuit such that $C(a) = f(a)$ and $C(a\oplus e_i) = f(a \oplus e_i)$. Then, either (1) there is a continuous positive path from $x_i$ to root or (2) $x_i$ directly feeds into a negation gate or (3) there is a continuous positive path from $x_i$ to a gate which feeds into a negation gate of $C$. 

Proof is by induction on $\negs(C)$. Let $C$ be any circuit such that $f(a) = C(a)$ and $f(a \oplus e_i) = C(a \oplus e_i)$. 

\noindent {\sf Base Case:} For the base case, $\negs(C) = 0$ and $C$ is a monotone circuit. By definition, $i \in \psens(f,a)$ implies that $a > a\oplus e_i$. We claim that if $i \in \psens(f,a)$, then $C(a) = 1$. For a contradiction, suppose that $f(a) = C(a) =  0$. Then $C(a \oplus e_i) = 0$ because $C$ is monotone. But then $f(a \oplus e_i) = 0$ which contradicts the fact that $i \in \psens(f,a)$.

Since $C(a)$ outputs $1$ and since $C$ is a monotone circuit, the root being an 
$\lor$ or $\land$ gate must have a child gate evaluating to $1$. Since this gate is 
again $\lor$ or $\land$ the same argument applies implying that there exists a series of gates all 
evaluating to $1$ reaching some inputs.
For any $i \in \psens(f,a)$,  we show that there must a path from $x_i$ to the 
root with all the gates in the path evaluating to $1$ in $C$ on input $a$ (implying that (1) holds). 

For a 
contradiction, suppose that every path from $x_i$ to the root gate passes via some 
gate that evaluates to $0$.  Among all the paths from $x_i$ to the root, collect all 
the gates that evaluate to $0$ for the first time in the path and call this set as 
$T$. We fix all the variables except $x_i$ to the values in $a$ and view each of 
the gates in $g \in T$ as a function of $x_i$. Now, flipping $x_i$ from $a_i=1$ to 
$0$ does not change the output of any $g \in T$ as they compute monotone functions 
and already evaluate to $0$. Since all other values are fixed, the output of the root 
gate does not change by this flip which contradicts the fact that $i \in \psens(f,a) $.

\noindent{\sf Induction Step:} Let $C$ be a circuit with $f(a) = C(a)$ and $f(a \oplus e_i) = C(a \oplus e_i)$ and $\negs(C) \ge 1$. If $x_i$ feeds directly into a negation gate, then statement (2) holds as required. Otherwise, let $g$ be the first gate that feeds into a negation in the topologically sorted order of the gates of $C$. 

We have the following two possibilities. In both the cases, we argue existence of 
continuous positive path in $C$ from the variable $x_i$, thereby completing the induction. 
\begin{itemize}
\item {\it On 
input $a$, flipping $a_i$ change the output of $g$}. Denote the function 
computed at $g$ as $f_g$. Then $f_g$ is monotone and $i \in \psenss(f_g,a)$ and is 
non-empty. Hence, applying the argument in the base case to $f_g$ and the monotone 
circuit rooted at $g$, we are guaranteed to get a continuous positive path from 
$x_i$ to $g$. Since the circuit at $g$ is a sub-circuit of $C$ (that is, it appears 
as an induced subgraph), this gives a continuous positive path in $C$ also. 
\item {\it On input $a$, flipping $a_i$ does not change the output of $g$}.  
In this case, we remove the negation gate that $g$ feeds into and hard-wire the 
output of this negation gate (on input $a$) in $C$ to get a circuit $C'$. 
Note that all other gates in $C$ are left intact. Observe 
that $C'(a) = f(a)$. Since flipping $a_i$ did not change the output of $g$ and as all other gates are left intact, $C'(a \oplus e_i) = f(a\oplus e_i)$. As $\negs(C') = \negs(C)-1$, by induction, either (1) there is a continuous positive path from $x_i$ to root or (2) there is a continuous positive path from $x_i$ to a gate which feeds into a negation gate of $C'$. 
By construction, $C'$ is same as $C$ except for the negation gate. Hence, a continuous positive path in $C'$ is also a continuous positive path in $C$.
\end{itemize}
\end{proof}

\subsubsection{From Positive Sensitivity to Energy Lower Bounds}\label{psens:lb}
We call the negation gates and the root gate of a circuit as \emph{target gates}.
In \cref{cpath:gen-circuit}, we have already shown the existence of continuous positive paths from a positive sensitive index up to a target gate. Using this, we show an energy lower bound for any circuit of bounded fan-in computing a Boolean function $f$ in terms of $\psens(f)$.  Since the fan-in of the circuit is limited, we exploit the idea that in a connected DAG, the number of internal nodes (in-degree at least $1$)  is lower bounded by the number of source nodes (in-degree $0$).

Since every such positive sensitive index is reachable via a continuous positive path from a target gate, we obtain a lower bound on energy by applying this idea on an appropriate subgraph constructed from our circuit.
\begin{reptheorem}{psens:energy}
For any Boolean function $f:\zon \to \zo$ computed by a circuit $C$ over the Boolean basis, $ \EC(C) \ge \psens(f)/3$. 
\end{reptheorem}
\begin{proof}
Without loss of generality assume that $f$ is non-constant. Let $C$ be any circuit computing $f$ of fan-in $2$ such that $\EC(C) = \EC(f)$. We 
prove that $\forall a \in \zon, \psens(f,a) \le 3\EC(C)$.

Let $a \in \zon$ by any input. If $\psens(f,a) = 0$, the claim holds. Hence we can 
assume, $\psens(f,a) \ne 0$. Let $T$ be the set of all target nodes in $C$. 
For every $i \in \psenss(f,a)$, by~\cref{cpath:gen-circuit}, there exists continuous positive paths starting from $x_i$ to a gate $g \in T$.

For every $g \in T$, let $X_g$ be the set of all gates that lie in a continuous 
positive path from an $x_i$ to $g$ for some $i \in \psenss(f,a)$. Note that the subgraph induced by vertices in $X_g$ is connected and does not include $g$.  
We now obtain a connected DAG with $\psens(f,a)$ leaves as follows. Let $D$ be a full binary tree (with edges directed from child to parent) with $|T|$ many leaves and hence $|T|-1$ internal nodes. For each $g \in T$ if it is a negation, we attach the gate feeding into $g$ as a leaf of the $D$ and if it is a root, we attach the root as a leaf of the $D$. We call the resulting DAG as $H$. 

Let $X = \cup_{g \in T} X_g$. We now argue that $\psens(f,a) \le |X|+|T|$. Observe that the number of internal nodes in the DAG $H$ is $|X| + (|T|-1)$ where the first term is the gates in $X$ and the second term is the number of internal nodes of the tree $D$. 
Since graph induced on $X_g$ is connected for each $g$, this results in the DAG $H$ to be connected with $\psens(f,a)$ many source nodes. As $H$ is connected, the number of leaves, which is $\psens(f,a)$, is at most the number of internal nodes $+1$ which is $|X|+|T|-1+1 = |X|+|T|$.

We now give a bound on $|X|$ and $|T|$. Recall that since the target gates include negations and the root, $|T| = \negs(C)+1$. Since $\negs(C)$ in any circuit $C$ is at most $\EC(C)$ (\cref{energy:negs}), $|T| \le \EC(C)+1$. Since all gates in $X$ output $1$ as they belong to some continuous positive path in $C$, $|X| \le \EC(C)$. This implies that, $\psens(f,a) \le |X|+|T| \le 2\EC(C)+1 \le 3\EC(C)$ as $f$ is non-constant.

\end{proof}

This implies that since $\psens(\land_n) = n$, for  $\EC(\land_n) \ge n/3$  which is asymptotically tight by~\cref{ub:fanin-bounded}. We observe that 
even thought $\land_n$ is symmetric, the result of Suzuki \etal~\cite{SUZ13} 
on the energy lower bound on threshold circuits computing symmetric functions (which applies to Boolean circuits too), only yields a trivial lower bound (see~\cref{tab:tradeoff}). 
We remark that both these bounds does not give any non-trivial energy lower bound for $f = \lor_n$.
Note that~\cref{psens:energy} uses the fact that the circuits used have fan-in $2$. If fan-in of the circuit $C$ is $c$, then replacing each gate by a tree of $c-1$ gates of fan-in $2$, by a similar argument as before, $\EC(C) \ge \psens(f)/(c+1)$.

\subsection{Energy Lower Bounds from Karchmer-Wigderson Games}\label{sec:kw:lb}
\cref{energy:mon} says that the monotone circuits are not energy efficient for computing monotone functions. In this section, we explore the limits on how  energy efficient can non-monotone circuits be in computing monotone functions by showing the following. 
\begin{reptheorem}{kw:lb}
Let $f:\zon \to \zo$ be a monotone function. Then $\EC(f) = \Omega(\KW^+(f))$.
\end{reptheorem}

Our approach is to use~\cref{cpath:gen-circuit} and utilize the existence of 
continuous positive paths to design a $\KW^{+}$ protocol of cost $\EC(C)\log (\text{fan-in}(C))$ in \cref{lb:kw:energy} which immediately proves the above theorem. For the perfect matching function $f_{PM}$ on a graph of $n$ edges since $\KW^+(f_{PM}) = \Omega(\sqrt{n})$~\cite{RW92}, this implies that any circuit $C$ with constant fan-in computing $f_{PM}$ require an energy of $\Omega(\sqrt{n})$.

Recall that for $x \in f^{-1}(1)$ and $y \in f^{-1}(0)$, $S^+_f(x,y) \defn \set{ i \mid  x_i =1, y_i =0, i \in [n]}$. Also, we call the set of all negation gates, along with the root gate of $C$ as the \emph{target gates} of $C$. 
\begin{lemma}\label{lb:kw:energy}
For a non-constant monotone Boolean function $f$, let Alice and Bob hold inputs $a \in f^{-1}(1)$ and $b \in f^{-1}(0)$ respectively. Let $C$ be any circuit computing $f$, and every gate in the circuit is either a $\land, \lor$ with fan-in of at most $c$ or a negation gate. Then, $\KW^+(f) \le \EC(C) \log c$.
\end{lemma}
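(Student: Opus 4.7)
The plan is to design an interactive protocol for the monotone Karchmer--Wigderson game in which Alice and Bob jointly trace a root-to-leaf path in $C$, maintaining throughout the invariant that the current gate $g$ satisfies $g(a) \ne g(b)$. The invariant holds at the root because $C(a) = f(a) = 1$ and $C(b) = f(b) = 0$. At a negation gate, both players silently advance to its unique input, preserving the invariant with the polarity flipped (if $g(a) = 1, g(b) = 0$, then the input has value $0$ on $a$ and $1$ on $b$). At an $\lor$ gate with $g(a) = 1, g(b) = 0$, every child has value $0$ on $b$ while at least one child has value $1$ on $a$, so Alice identifies such a child using $\lceil \log c \rceil$ bits; dually, at an $\land$ gate with the same polarity, Bob identifies a child with value $0$ on $b$, which automatically has value $1$ on $a$. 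The symmetric cases with polarity $g(a) = 0, g(b) = 1$ are handled by swapping the roles of Alice and Bob. When the descent reaches an input variable $x_i$, the invariant forces $a_i \ne b_i$.

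To bound the communication cost, I would observe that every gate $g$ on the path satisfies $g(a) = 1$ or $g(b) = 1$ by the invariant, so it contributes either to $\EC(C,a)$ or to $\EC(C,b)$, each of which is at most $\EC(C)$. Consequently the descent passes through $O(\EC(C))$ gates, and since each $\lor/\land$ step costs $\lceil \log c\rceil$ bits while negation steps are free, the total communication is $O(\EC(C)\log c)$, as claimed.

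The main obstacle is guaranteeing that the descent ends at a leaf $x_i$ with $a_i = 1, b_i = 0$ (an index in $S^+(a,b)$), rather than at a leaf with the opposite polarity, since each negation on the path flips the polarity of the invariant and a careless descent could land in $S^-(a,b)$. To resolve this I would invoke \cref{cpath:gen-circuit}: because $f$ is monotone, $S^+(a,b) \ne \emptyset$, and for some $i^\ast \in S^+(a,b)$ there exists a continuous positive path on input $a$ from $x_{i^\ast}$ to a target gate of $C$. Alice can compute such a witness path offline from $a$ and $C$, and use it to steer her child-selections at $\lor$-gates so that the descent reverses a continuous positive path and terminates at $x_{i^\ast}$ with the correct polarity, completing the proof of $\KW^+(f) \le \EC(C)\log c$.
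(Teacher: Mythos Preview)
Your descent protocol correctly solves the \emph{non-monotone} Karchmer--Wigderson relation (finding $i$ with $a_i\ne b_i$) with communication $O(\EC(C)\log c)$, but the last paragraph, which is supposed to upgrade this to the monotone output $a_i=1,\,b_i=0$, does not work.

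First, \cref{cpath:gen-circuit} applies only to indices $i\in\psenss(f,a)$, not to arbitrary $i\in S_f^+(a,b)$; take $f=x_1\lor x_2$, $a=11$, $b=00$, where $S_f^+(a,b)=\{1,2\}$ but $\psenss(f,a)=\emptyset$, so no continuous positive path is promised at all. The paper repairs this by having Alice first replace $a$ with a $\preceq$-minimal $a'\preceq a$ satisfying $f(a')=1$, so that every $1$-bit of $a'$ is positively sensitive; you omit this step. Second, and more seriously, Alice cannot ``steer'' the joint descent along a pre-chosen continuous positive path: at an $\land$ gate with polarity $g(a)=1,\,g(b)=0$ it is \emph{Bob} who selects the child, and he does not know $a$; all children evaluate to $1$ on $a$, so the invariant survives whatever Bob does, but nothing forces him to stay on Alice's path. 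Once a negation flips the polarity, the descent proceeds through gates with value $0$ on $a$, where Alice's continuous-positive-path information is of no help. The paper's proof avoids the whole difficulty by abandoning the interactive descent: after minimizing $a$, Alice \emph{unilaterally} spells out the continuous positive paths (one gate-address at a time, $\log c$ bits each) from the target gates down to the positively sensitive input bits, and Bob merely checks at each leaf whether $b_i=0$; since every gate Alice names fires on $a$, the total cost is at most $\EC(C,a)\log c\le \EC(C)\log c$.
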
 
\begin{proof}
We argue that, without loss of generality it can be assumed that $\psenss(f,a) = \{ i \mid a_i = 1 \}$. 
Alice finds an $a' \prec a$ with $f(a') = f(a) = 1$ such that for any $a'' \prec a'$, $f(a'') = 0$. Observe that $a' \ne 0^n$ for otherwise, $f(0^n) = 1$ and since $f$ is monotone, $f$ must be a constant which is a contradiction.
By construction,  every bit in $a'$ which is 1 is sensitive.
Since $a' \prec a$, $S_f^+(a',b) \sse S_f^+(a,b)$, thereby it suffices to find an index in $S_f^+(a',b)$.

We now describe the protocol. Let $a \in f^{-1}(1)$ such that $\psenss(f,a) = \{ i \mid a_i = 1 \}$. Before the protocol begins, Alice does the following pre-computation.
Let $\calP$ be the collection of positive paths one each for every $i \in \psenss(f,a)$, which exists as per \cref{cpath:gen-circuit}. Alice computes $\calP = \bigcup_{g \in T} \calP_g$ where $\calP_g$ is the collection of all continuous positive paths ending at the target gate $g$. This ends the pre-processing.

Now Alice and Bob fixes an ordering of the target gates. For each target gate $g \in T$ in the order, the following procedure is repeated.
For each continuous positive path $p \in \calP$, ending at $g$, Alice sends the  address of the previous gate in the path $p$ (using $\log c$ bits) until they trace back to an input index $i$. Now, Bob checks if $b_i = 0$, and if so, we have found $i \in S_f^+(a,b)$, else, they attempt on the next $p \in \calP_g$.

We argue about the correctness of the protocol. Notice that the above protocol searches through all $i \in \psenss(f,a)$ by traversing through all $\calP_g$, for $g \in T$. Since $\psenss(f,a) = \{ i \mid a_i = 1 \}$  and $S_f^+(a,b) \sse \psenss(f,a)$ the protocol correctly computes $i$ such that $a_i=1$ and $b_i=0$. We now analyze the cost of the protocol. Observe that whenever Alice and Bob encounters a new path in $\calP$ with parts of the path already traversed, they can move along the edges without exchange of any address. Hence, a gate in a path belonging to $\calP$ gets visited at most once. Since the protocol visits only those gates that output $1$ at most once on the input $a$, we have a protocol with communication cost $ \le \EC(C,a) \times \log (c) \le \EC(C) \log c$.
\end{proof}
\subsection{Energy Lower Bounds for Depth Three Circuits}\label{sec:depththree}
We now consider lower bounds on the energy complexity  of constant depth (unbounded fan-in) circuits computing the parity function on $n$ bits. Observe that the energy complexity of circuits of unbounded fan-in can be very small compared to bounded fan-in circuits\footnote{For instance, $\land_n$ has a fan-in $n$ circuit of depth $1$ and energy $1$ computing it while, by~\cref{psens:energy}, energy of any bounded fan-in circuit computing the same function is $\Omega(n)$.}. Hence the results proved in the bounded fan-in setting (~\cref{psens:energy}) does not directly apply.

For any Boolean function $f$, the trivial depth $2$ circuit of unbounded fan-in computing $f$ has an energy $n+2$ and it can be shown that any depth two Boolean circuit computing the parity on $n$ bits require an energy of $n+1$. 
\begin{proposition}
Let $C$ be any depth $2$ circuit computing $\oplus_n$. Then $\EC(C) \ge n+1$.
\end{proposition}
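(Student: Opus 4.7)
My plan is a case analysis on the type of the top gate of $C$, under the standard convention that leaf $\lnot$-gates realize literals at the bottom (so ``depth $2$'' means two levels of $\lor/\land$ over literals); the possibilities where the root is a literal or a $\lnot$ cannot compute $\oplus_n$ for $n\ge 2$ and are dismissed immediately.

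\emph{Top gate $\lor$.} First I would rule out constants or literals as direct children of the root: a literal child $\ell$ would force $C=1$ on every input on which $\ell=1$, but the support of any single literal is not contained in $\oplus_n^{-1}(1)$ when $n\ge 2$. Hence every child of the root is an $\land$-gate over literals. Next, each such $\land$ must mention every variable, because if some $x_j$ is missing then the term's satisfying set is closed under flipping $x_j$ and therefore contains inputs of both parities, contradicting $C=\oplus_n$. So $C$ contains a distinct $\land$-gate realizing the full $n$-minterm of every $a\in \oplus_n^{-1}(1)$. Now pick $a=e_j$ for any $j\in[n]$, so $\oplus_n(a)=1$ and $|a|=1$. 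On $a$ the root $\lor$ fires, the $\land$-gate for the minterm of $a$ fires, and for each of the $n-1$ indices $i\ne j$ the literal $\neg x_i$ appears in that minterm, so $C$ must contain a leaf $\lnot$-gate computing $\neg x_i$, which fires since $a_i=0$. These $n+1$ gates are pairwise distinct (one $\lor$, one $\land$, and $n-1$ $\lnot$'s, separated by gate type), giving $\EC(C) \ge \EC(C,a) \ge n+1$.

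\emph{Top gate $\land$.} Dually, each root-child must be an $\lor$-gate of literals forming a full $n$-maxterm, one for every $a\in \oplus_n^{-1}(0)$, for a total of at least $2^{n-1}$ clause gates. On any input $a$ with $\oplus_n(a)=1$ every maxterm is satisfied, so all $2^{n-1}$ clause $\lor$-gates fire together with the root $\land$, yielding $\EC(C,a) \ge 2^{n-1}+1 \ge n+1$ for $n\ge 2$.

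The main subtlety I foresee lies in the bookkeeping of the OR case: one must verify that the $n-1$ leaf $\lnot$-gates claimed to fire really occur as distinct gates of $C$ (they do, since $\neg x_i$ appears in some parity-$1$ minterm for every $i$) and are counted disjointly from the selected $\land$-gate and the root. The AND case then reduces to the mild numerical check $2^{n-1}+1 \ge n+1$ for $n\ge 2$.
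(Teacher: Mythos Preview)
Your proof is correct. The paper avoids the root-type case split by first establishing a single structural lemma: at least $n-1$ of the variables must occur negated somewhere in $C$ (since if two variables $x_i,x_j$ were never negated, setting $x_i=0$ kills every bottom $\land$ while $x_j=1$ fires every bottom $\lor$, freezing the output regardless of the remaining inputs). It then evaluates $C$ on the input that is $0$ on those $n-1$ negated variables and $1$ elsewhere, counting the $n-1$ firing leaf negations, one $\land$ gate, and the root to reach $n+1$. Your $\lor$-root case is effectively the same argument specialised: the full-minterm observation directly exhibits the required negations, and your witness $e_j$ coincides with the paper's input. Your $\land$-root case, however, is a genuinely different and cleaner idea---rather than tracking negations at all, you simply count the $2^{n-1}$ full maxterm clauses, all of which fire together with the root on any parity-$1$ input, yielding the much stronger bound $2^{n-1}+1$. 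The trade-off is that the paper's negation lemma handles both root types uniformly (and in fact accommodates mixed $\land/\lor$ gates at the bottom level), whereas your route is more elementary but needs the split. One small omission in your casework: you rule out literal children of the $\lor$-root but not $\lor$-gate children (and dually); this is harmless, since an $\lor$ child can be absorbed into the root, reducing to the literal case you already dismissed, but it is worth saying explicitly since the paper's model does allow mixed bottom-level gates.
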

\begin{proof}
Since $C$ computes $\oplus_n$, no variable or its negation can feed into the root gate,  and every variable or its negation must feed into all gates at depth $2$.

We now argue that at least $n-1$ variables must be negated in $C$. Suppose not, 
then there must be two variables, say $x_i$ and $x_j$, that feeds into all the  gates in depth $2$ unnegated. Setting $x_i=0$, all the $\land$ gates at 
depth $2$ must evaluate to $0$. Similarly, setting $x_j=1$, causes all $\lor$ gates in depth $2$ to evaluate to $1$. Hence the circuit $C$ evaluates to a fixed value irrespective of the remaining $n-2$ inputs unset 
which is a contradiction. Thus we conclude that at least $n-1$ variables must be negated. Consider an input $x$ that is $0$ on these $n-1$ 
negated variables and $1$ on the remaining variable. On this input, all the 
negation gates, the $\land$ gate which they feed into and the root gate 
evaluates to $1$. Hence $\EC(C) \ge \EC(C,x) \ge n-1+1+1 = n+1$.
\end{proof}
Santha and Wilson~\cite{JSW93} showed that for any unbounded fan-in circuit $C$ 
of depth $d$ computing $\oplus_n$, $\negs(C) \ge d( \lceil n / 2 \rceil )^{1/d}-
d$. Since energy complexity of a circuit $C$ is at least the number of negation 
gates in $C$ (\cref{energy:negs}), this implies that $\EC(C) \ge d( \lceil n / 2 
\rceil )^{1/d}-d$ for any such circuit $C$ computing $\oplus_n$.

While we are unable to prove strong lower bounds for circuits of depth $d$ for 
an arbitrary constant $d$, we show that any depth $d=3$  unbounded fan-in 
circuit computing the parity function requires large energy. We achieve this by 
appealing to the known lower bounds on size of any constant depth circuit 
computing $\oplus_n$. 
Razborov showed that any circuit $C$ of depth $d$ of unbounded fan-in computing parity on $n$ bits must be of size at least $2^{\Omega(n^{1/4d})}$~\cite{R87}. Using this result we show an energy lower bound of $\Omega(n)$ for any depth $3$ circuit computing $\oplus_n$.

\begin{reptheorem}{lb:depththree:raz}
Let $C$ be any unbounded fan-in circuit of depth $3$ computing the parity function 
on $n$ variables. Then, $\EC(C)=\Omega(n)$.
\end{reptheorem}

\begin{proof}
We call the root gate of the circuit as the ``top" level and the two 
level immediately below as the ``middle" and ``bottom" 
levels respectively. Note that negations can appear anywhere in the circuit and does not count towards the level.
Assume without loss of generality that the circuit $C$ does not have any redundant gates. 

Let there be $i$ negated input variables and 
without loss of generality assume $i < n$. We set these variables to $0$ and let $C'$ be the resulting circuit obtained. Let $g_1, g_2 ,\ldots, g_k$ be the $k
$ gates in the bottom layer that feed to the layers above via negation gates. We 
set input variables to these $k$ gates such that the output of the negations are 
fixed in the following way: for the gate $g_i$, consider any input variable, say 
$x_j$, that feed into $g_i$ and set it to $0$ if $g_i$ is $\land$ gate and $1$ if 
$g_i$ is $\lor$. We also remove the gates that have become a constant and hardwire 
their output to get the resulting circuit $C'$. Hence, all the gates at the bottom 
level are not fed negated to the level above.

In this process, we have eliminated the $k$ negations leaving us with the circuit 
$C''$ where all the gates at bottom and middle layer computes some monotone function 
on the remaining $m=n-(i+j)$ for some $j \le k$ variables. Since the resulting 
circuit must compute parity on $m$ variables, 
by~\cite{R87}, $size(C'') 
\ge 2^{\Omega(m^{1/12})}$. Since $C''$ is of depth $3$, the number of bottom and 
middle gates in $C'$ must also be at least $2^{\Omega(m^{1/12})}$. 
As the gates in the  bottom and 
middle level computes monotone function, setting all the variables to $1$ in $C''$ 
forces all of them must output $1$ (Here we use 
the fact that the redundant gates are eliminated in $C$). 
Hence in $C$, there is a setting of input such that at least $i+k \ge i+j= n-m$ gates contributes an energy of $1$ (since either the input to the negation or the negation gate itself will be $1$) and $2^{\Omega(m^{1/12})}$ gates in $C$ that evaluate to $1$. Hence, $\EC(C) \ge n-m+2^{\Omega(m^{1/12})}$. Let $c$ be the smallest integer such that for all $t \ge c$, $2^{\Omega(t^{1/12})}$ is at least $t$. There can be two cases: 
\begin{enumerate}
	\item If $m \ge c$, then $\EC(C) \ge n-m+2^{\Omega(m^{1/12})} \ge n$, 
	\item Otherwise, $m < c$ and $\EC(C) \ge n-m+2^{\Omega(m^{1/12})} \ge n-c+1 = \Omega(n)$ as $c$ is independent of $n$.
\end{enumerate}
 Hence $\EC(C) = \Omega(n)$. 
\end{proof}

\section{Energy Complexity of Boolean Formulas}\label{sec:formula}

For a formula $F$, let $\L(F)$ denote the number of leaves in the formula $F$. For any formula $F$, we have $\fEC(F) \le 2\L(F)-1$ as $F$ has $\L(F)-1$ internal nodes and can have at most $\L(F)$ negation gates as leaves (in the worst case). Unlike circuits, any sub-function computed in a formula cannot be reused which can potentially lead to many gates that output a $1$ on some input. For this reason, one would expect that it is unlikely for Boolean formulas to be energy efficient. As a warm up, we first implement the above argument for structured Boolean formulas where we prove strong lower bounds of $\Omega(\L(F))$ (\cref{sec:warmup}) and discuss its limitations. Then, using a different approach, we show a weaker lower bound of $\Omega(\sqrt{\L(F)}-\depth(F))$ (\cref{sec:fml}) for  arbitrary Boolean formulas.

\subsection{A Warm up}\label{sec:warmup} 
We consider the following approach to prove a lower bound on 
energy complexity of a formula $F$ by exhibiting an input on which many 
gates are guaranteed to output a $1$. Suppose $t$ be the number of gates in 
a formula which have both its inputs as variables. We call such gates as 
\textit{non-skew gates}. Now, set the $n$ variables to $0$ or $1$ uniformly 
at random. Then, each of the $t$ gates evaluate to a $1$ with probability 
at least $1/4$. Hence, on expectation, there are at least $t/4$ such gates 
evaluating to $1$. This implies the existence of an input on which $
\Omega(t)$ gates fire which gives the following proposition. 
\begin{proposition}\label{formula:lb:nonskew}
For a formula $F$, let $t$ be the number of non-skew gates in $F$. If $t = \Omega(\L(F))$, then $\fEC(F)=\Theta(\L(F))$.
\end{proposition}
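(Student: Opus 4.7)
The plan is to make rigorous the one-line probabilistic sketch given in the paragraph immediately preceding the proposition. The trivial direction is the upper bound $\fEC(F) = O(\L(F))$: a binary Boolean formula with $\L(F)$ leaves contains at most $\L(F)-1$ internal $\land$/$\lor$ gates, and after pushing any negations down to the leaves (if necessary) at most $O(\L(F))$ negation gates at the leaf level, so the total number of gates that can ever fire is $O(\L(F))$. The nontrivial direction is the matching lower bound $\fEC(F) = \Omega(t)$, which combined with the hypothesis $t = \Omega(\L(F))$ yields $\fEC(F) = \Theta(\L(F))$.

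For the lower bound I would sample $a \in \zon$ uniformly at random and, for each non-skew gate $g$ (whose two inputs are leaves of $F$, hence labelled by input variables), compute $\Pr[g(a) = 1]$ by a short case analysis on the gate type ($\land$ vs.\ $\lor$) and on whether the two variables feeding $g$ are distinct or identical. In every case the firing probability is at least $1/4$: an $\land$ of two distinct variables fires with probability exactly $1/4$, an $\lor$ of two distinct variables with probability $3/4$, and an $\land$ or $\lor$ whose two inputs happen to be the same variable reduces to computing that variable and fires with probability $1/2$.

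By linearity of expectation,
\begin{equation*}
\mathbb{E}_{a}\bigl[\,\#\{\,g \text{ non-skew in } F : g(a) = 1\,\}\,\bigr] \;\ge\; \frac{t}{4},
\end{equation*}
so there exists a concrete input $a^{\ast}$ on which at least $t/4$ non-skew gates of $F$ fire, giving $\fEC(F) \ge \fEC(F, a^{\ast}) \ge t/4$. Substituting the hypothesis $t = \Omega(\L(F))$ and combining with the trivial upper bound closes the argument.

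I do not foresee any significant obstacle here: the only care needed is in the case analysis verifying the uniform $1/4$ lower bound on per-gate firing probability across the flavors of non-skew gates, which is a short finite check, and in pinning down the convention that the two inputs of a non-skew gate are leaves of $F$ (so they are variables, never negated literals, since negations are separate internal gates). The interest of the statement lies in observing that this very simple probabilistic argument already extracts a linear-in-$\L(F)$ energy lower bound from a purely structural condition on the formula, which motivates the more delicate treatment of arbitrary formulas in the next subsection.
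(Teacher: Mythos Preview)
Your lower-bound argument is exactly the paper's: sample inputs uniformly, note that each non-skew gate fires with probability at least $1/4$, apply linearity of expectation, and extract a witnessing input. One minor quibble on the upper bound: ``pushing negations down to the leaves'' produces a different formula and so does not directly bound $\fEC(F)$ for the given $F$; the clean count is that (absent redundant double negations) $F$ has $\L(F)-1$ binary gates and at most $O(\L(F))$ negation gates, hence $O(\L(F))$ gates in total---though the paper itself simply asserts $\fEC(F)\le \L(F)-1$ without further comment, so you are in good company.
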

However, this argument fails\footnote{In the conference version of this paper~\cite{DOS18}, it was erroneously claimed that \cref{formula:lb:nonskew} holds for \textit{all} Boolean formulas (that is, irrespective of $t$).} for formulas where  the gates are \textit{skew} 
(\ie~exactly one of the input to the gate is a variable) since
randomly setting the input does not necessarily guarantee a constant 
probability for the skew part to output a $1$ (for example, consider formulas whose underlying graph is the fully right skewed tree). Hence, this approach does not 
give a lower bound for $\fEC(F)$ for an arbitrary formula $F$.

Nevertheless, there can be special formulas for which we can prove the lower bound of $\Omega(\L(F))$. For instance, consider the read-once formulas with negations at leaf. Similar to the argument of energy of monotone circuits (\cref{energy:mon}), the following can be concluded about them (irrespective of the formula structure).
\begin{proposition}
For any read-once formula $F$ with negations at the leaf, $\fEC(F) \ge \L(F)-1$.
\end{proposition}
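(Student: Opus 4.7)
The strategy mirrors the proof of \cref{energy:mon} for monotone circuits: I will exhibit a single input on which every internal $\land/\lor$ gate of $F$ fires, which matches the trivial upper bound $\fEC(F)\le \L(F)-1$ coming from the fact that a fan-in-$2$ binary tree with $\L(F)$ leaves has exactly $\L(F)-1$ internal gates (the leaf negations, being determined by the leaf literals, are naturally absorbed into the leaves).

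To construct the witnessing assignment $a\in\zon$, I use read-onceness crucially. Since each variable $x_i$ occurs in exactly one leaf of $F$, either positively as $x_i$ or under a leaf negation as $\neg x_i$, I can independently choose $a_i$ so that the single leaf containing $x_i$ evaluates to $1$: set $a_i=1$ when the leaf is $x_i$ and $a_i=0$ when it is $\neg x_i$. No conflict can arise precisely because every variable appears at a unique leaf, so $a$ is a well-defined input and every leaf literal of $F$ evaluates to $1$ on $a$.

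Then a bottom-up induction on the depth of a gate $g$ of $F$ shows that $g$ outputs $1$ on $a$: the base case consists of internal gates whose children are literals, which output $1$ since both inputs are $1$; for the inductive step, any internal $\land$ or $\lor$ gate has two children each evaluating to $1$ (by induction or by being a literal leaf), and $1\land 1 = 1\lor 1 = 1$ regardless of the gate type. Hence all $\L(F)-1$ internal gates of $F$ fire on $a$, giving $\fEC(F)\ge\EC(F,a)\ge \L(F)-1$, and combined with the upper bound this yields equality.

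There is no serious obstacle here; the only subtle point is that the read-once hypothesis is exactly what makes the literal assignments independent (without it, e.g.\ in $x_1\land\neg x_1$, one would be forced to drive some leaves to $0$), and that leaf negations should be treated as part of the literal leaves rather than counted as extra $\land/\lor$-style internal gates contributing to the $\L(F)-1$ term.
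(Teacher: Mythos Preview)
Your proof is correct and follows essentially the same approach the paper alludes to: it explicitly says the result is obtained ``similar to the argument of energy of monotone circuits (\cref{energy:mon})'', i.e., exhibit one input on which every internal gate fires. Your use of read-onceness to consistently set each literal to $1$ is exactly the right adaptation of the $1^n$ input from the monotone case, and your remark about absorbing leaf negations into the literal leaves matches the convention needed for the bound $\fEC(F)\le \L(F)-1$ to hold.
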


\subsection{Bounds on Energy Complexity for Boolean Formulas}\label{sec:fml}
In this section, we take a different approach and show the following lower bound on the energy 
complexity of any Boolean formula.
\begin{reptheorem}{energy:formula}
For a Boolean function $f$, computed by a formula $F$, $$ \fEC(F) = \Omega 
\left(\sqrt{\L(F)}-\depth(F)\right ).$$
\end{reptheorem}
Though the above result applies for any Boolean formula, it does not give any non-trivial lower bound for formulas that have large depths due to presence of long path of skew gates.

We now describe our approach. The main idea is to use a structural decomposition result for Boolean 
formulas due to Guo and Komargodski~\cite{GK17} (see also Tal~\cite{T14}). 
They showed that any formula $F$ can be transformed to another ``structured 
formula'' $F'$ without blowing up the size.  More precisely,

\begin{theorem}[Theorem 3.1 of~\cite{GK17}]\label{thm:gk}
Let a Boolean function $f$ be computed by a Boolean formula $F$ with $
\negs(F) \ge 1$. Then, there exists $T \le 5\negs(F)-2$ monotone functions 
$g_1 \ldots, g_T $ where each $g_i$ is computed by a monotone formula 
$G_i$ and a function $h:\zo^T \to \zo$ computed by a read-once formula $H$ 
such that $f(x) = h(g_1,\ldots, g_T)$ computed by the formula $F'\defn 
H(G_1,\ldots,G_T)$ satisfy $\L(F') \le 2\L(F)$.
\end{theorem}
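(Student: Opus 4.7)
The plan is to apply \cref{thm:gk} to locate a large monotone sub-formula inside $F$, then exhibit the all-ones input $1^n$ as a witness that forces many of its gates to evaluate to $1$. To this end I first dispose of the easy case: by the formula version of \cref{energy:negs}, $\fEC(F) \ge \negs(F)$, so if $\negs(F) \ge \sqrt{\L(F)}$ I immediately conclude $\fEC(F) \ge \sqrt{\L(F)} \ge \sqrt{\L(F)}-\depth(F)$, and the bound is established.

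Otherwise $\negs(F) < \sqrt{\L(F)}$. Invoking \cref{thm:gk} on $F$ yields a restructured formula $F' = H(G_1,\ldots,G_T)$ for the same function, with $T \le 5\negs(F)-2 = O(\sqrt{\L(F)})$, each $G_i$ a monotone formula, $H$ a read-once formula on $T$ inputs, and $\L(F') \le 2\L(F)$. Since $H$ is read-once, the leaves of $F'$ partition into those of the $G_i$, so $\sum_{i} \L(G_i) = \L(F')$. Combined with the fact that $\L(F') = \Omega(\L(F))$ (which I would verify directly from the GK construction; intuitively GK reshuffles the negations but does not shrink the formula), a pigeonhole over the $T$ pieces produces some $G_j$ with $\L(G_j) \ge \L(F')/T = \Omega(\sqrt{\L(F)})$. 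Because $G_j$ is a monotone formula whose leaves are variables, a straightforward induction shows that on input $1^n$ every internal gate of $G_j$ evaluates to $1$, and therefore $G_j$ contributes $\L(G_j)-1 = \Omega(\sqrt{\L(F)})$ firing gates in $F'$ on this input.

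The main obstacle is to transfer this bound from $F'$ back to $F$, since the two are distinct formulas and energy is not a priori preserved by GK. The plan here is to argue that the GK construction only rewrites the ``top'' read-once part of $F$ while leaving each monotone piece $G_j$ (essentially) as a sub-formula of $F$ itself, so that the witness input $1^n$ forces $\Omega(\sqrt{\L(F)})$ gates of $F$ to fire as well. The loss in this transfer is at most along the path of length at most $\depth(F)$ between the root of $G_j$ and the output gate, where intervening negations or skew gates may cancel some firings; this path-charging is exactly what forces the additive $-\depth(F)$ correction in the claimed bound, and is also the reason (as the paper explicitly notes) that the bound becomes trivial for very deep formulas built from long skew paths. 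The delicate verifications are this transfer step and the lower bound $\L(F') = \Omega(\L(F))$; the rest of the argument is pigeonhole and direct evaluation on $1^n$.
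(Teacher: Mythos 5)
Your proposal does not prove the statement it was asked to prove. The statement is the Guo--Komargodski decomposition theorem itself (\cref{thm:gk}): every formula $F$ with $\negs(F)\ge 1$ can be rewritten as $F'=H(G_1,\ldots,G_T)$ with $H$ read-once, each $G_i$ monotone, $T\le 5\negs(F)-2$, and $\L(F')\le 2\L(F)$. Your write-up instead \emph{assumes} this theorem as a black box ("Invoking \cref{thm:gk} on $F$ yields\ldots") and uses it to sketch a lower bound on $\fEC(F)$ --- that is, you are proposing a proof of \cref{energy:formula}, not of \cref{thm:gk}. Nothing in your argument establishes the existence of the decomposition, the bound $T\le 5\negs(F)-2$, or the leaf bound $\L(F')\le 2\L(F)$; these are exactly the content of the statement and cannot be obtained by citing the statement. (In the paper this theorem is imported from~\cite{GK17} rather than reproved, but a proof of it would have to exhibit the construction: induct on $\negs(F)$, take the minimal subformula $F_1$ containing all negations, replace it by a fresh variable $z$ so that the remainder $F_2$ is monotone, expand $F_2 = F_2|_{z=0} \lor (F_2|_{z=1}\land z)$, recurse into $F_1$, and account for the multiplicative factor $2$ in the leaf count and the $5\negs(F)-2$ bound on the number of monotone pieces. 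None of this appears in your proposal.)

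Even judged as a proof of \cref{energy:formula}, your route diverges from the paper's and has unresolved gaps that you yourself flag as "delicate verifications": the claim that each $G_j$ survives essentially intact as a subformula of $F$ so that firings transfer back at a cost of only $\depth(F)$ is not justified --- the decomposition duplicates $F_2$ as $F_2|_{z=0}$ and $F_2|_{z=1}$ and restructures the formula globally, so a large monotone piece of $F'$ need not correspond to any subformula of $F$. The paper avoids this transfer entirely: it proves an \emph{upper} bound $\fEC(F')\le(5\negs(F)-2)(\fEC(F)+\depth(F)+1)$ by tracing the construction (\cref{fml:ub}), a \emph{lower} bound $\fEC(F')\ge\L(F)-(5\negs(F)-2)$ from the $1^n$ input (\cref{fml:gf:lb}), and compares the two, balancing against $\fEC(F)\ge\negs(F)$. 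If you intend to pursue your pigeonhole-plus-transfer strategy, the transfer step is precisely where the argument currently fails.
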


Before proceeding, we illustrate this result for a simple case (which we use later) when the formula $F$ has exactly one negation gate which it not at the root.

\begin{figure}[htp!]
	\begin{subfigure}[t]{.49\textwidth}
		\centering
\includegraphics[scale=0.4]{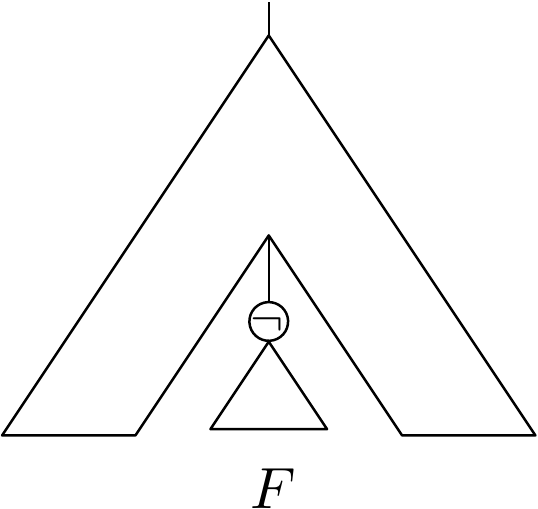}
\caption{Formula $F$}
	\end{subfigure} 
	\begin{subfigure}[t]{.59\textwidth}
		\centering
		\includegraphics[scale=0.4]{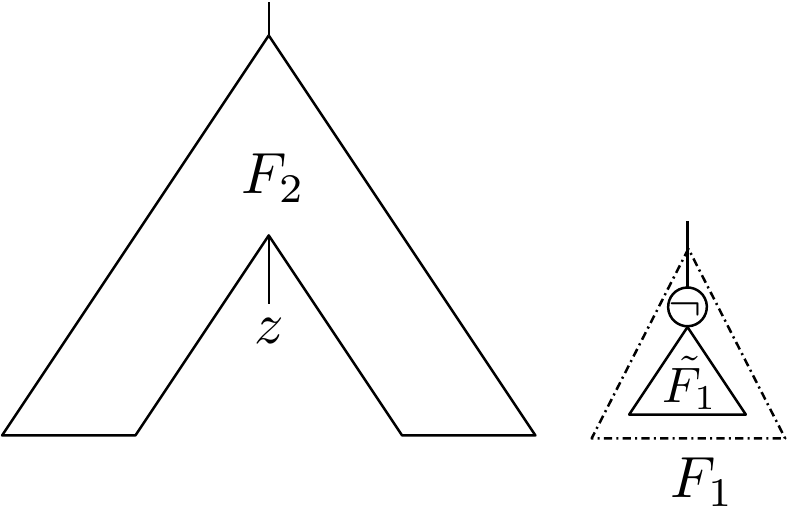}
		\caption{Sub formulas $F_1$ and $F_2$ of $F$}
	\end{subfigure}
	\caption{Theorem 3.1 of~\cite{GK17} for formula $F$ with $\negs(F)=1$}
	\label{fig:split}
\end{figure}

\cref{fig:split}(a) shows the formula $F$ with exactly one negation gate and \cref{fig:split}(b) shows two sub-formulas $F_1$ and $F_2$ with $F_2$ and $\tilde{F_1}$ being monotone.

\begin{figure}[htp!]
\centering
	\includegraphics[scale=0.4]{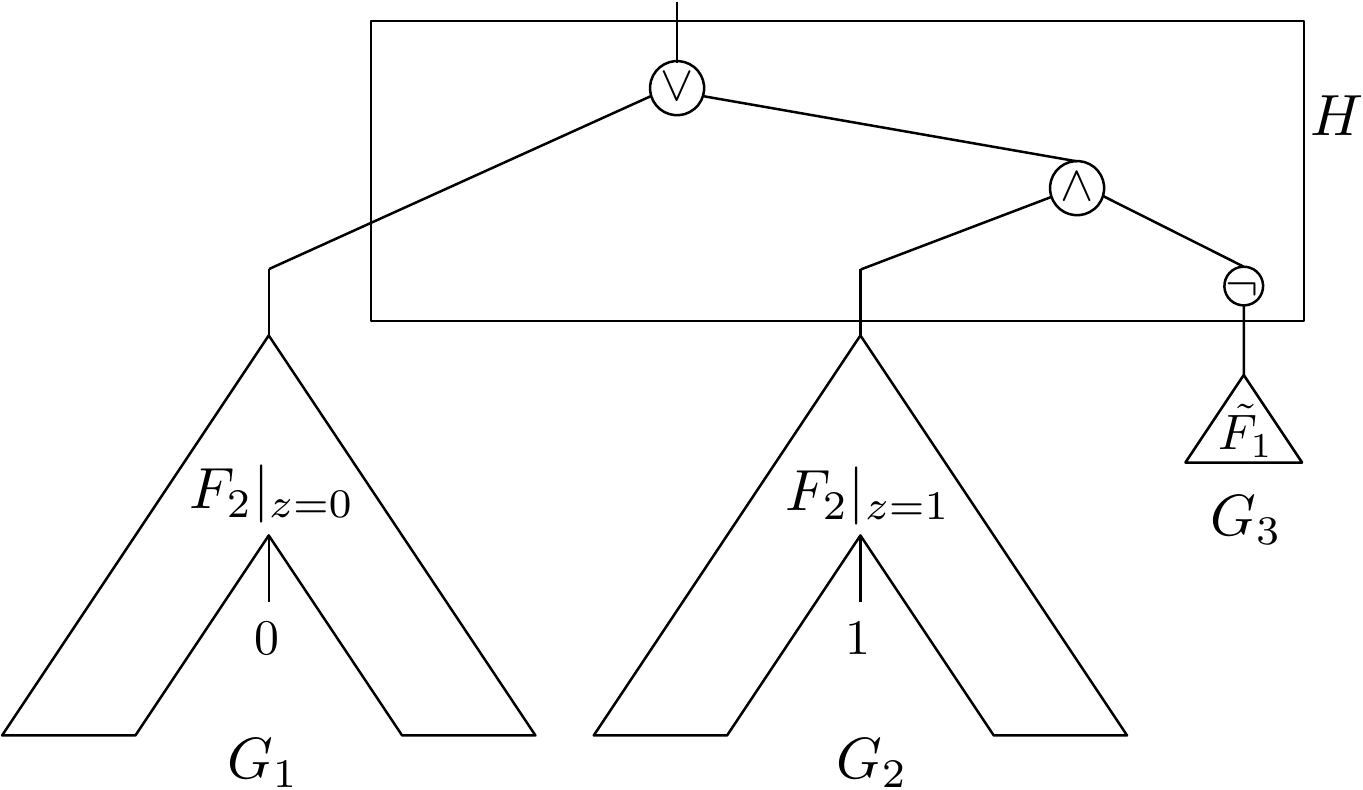}
	\caption{Structured formula $F'$ obtained when Theorem 3.1 of~\cite{GK17} is applied to $F$ in~\cref{fig:split}}
	\label{fig:split:final}
\end{figure}

The resulting formula obtained upon  applying~\cref{thm:gk} is $F' = F_2|_{z=0} \lor (F_2|_{z=1}\land \tilde{F_1})$ as shown in~\cref{fig:split:final}. The monotone formulas $G_1$, $G_2$ and $G_3$ are $F_2|_{z=0}$, $F_2|_{z=1}$ and $\tilde{F_1}$ respectively. The read-once formula $H$ is $H(y_1,y_2,y_3) = y_1 \lor (y_2 \land \neg y_3)$. Also, $\L(F') \le 2\L(F)$.

\noindent
We now describe our proof strategy: firstly, we analyze the energy of the formula $F'$ obtained in~\cref{thm:gk} and 
show (in~\cref{fml:ub}) that $\fEC(F')$ is upper bounded asymptotically by 
$O(\negs(F) \times (\fEC(F)+\depth(F))$. This implies that the decomposition 
in~\cref{thm:gk} is not only size efficient but also energy efficient. 
The specific structure of the formula from \cref{thm:gk} implies that $\fEC(F')$ is lower bounded by $\Omega(\L(F)-\negs(F))$ (\cref{fml:gf:lb}). 
Finally, comparing the upper and lower bound for $\fEC(F')$ gives a lower bound 
on $\fEC(F)$ in terms of $\L(F), \depth(F)$ and $\negs(F)$ using which we 
prove~\cref{energy:formula}. Before proceeding, we need the following 
observation.

\begin{proposition}\label{fml:lb}
Let $F$ be any formula and $g$ be any gate of $F$ other than the root. Let 
$D$ be a formula obtained by replacing the subtree at gate $g$ by a 
variable $z$. Then for any $b \in \zo$, $\EC(D|_{z=b}) \le \fEC(F)+\depth(F)
$.
\end{proposition}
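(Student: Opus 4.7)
The plan is to compare an evaluation of $D|_{z=b}$ on an arbitrary input $a$ with an evaluation of the original formula $F$ on a full extension of $a$. The variables of $D$ other than $z$ are exactly the variables of $F$ that do not label leaves of the subtree rooted at $g$, so I can extend any input $a$ to $D|_{z=b}$ into a full input $a^\star$ of $F$ by assigning the remaining variables (those in the subtree at $g$) arbitrarily. Crucially, I do not require the subtree at $g$ to evaluate to $b$ under $a^\star$, which side-steps what would otherwise be the main obstacle, namely that the function computed at $g$ might not even attain the value $b$.

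I would then partition the internal gates of $D$ into two groups. The \emph{path gates} are the gates on the path from the parent of $g$ to the root of $F$; there are at most $\depth(F)$ of them. The remaining gates of $D$, which I call \emph{off-path gates}, lie in the subtrees hanging off this path on the sibling side. The key observation is that the subtree of $F$ rooted at any off-path gate $h$ contains neither $z$ nor any gate of the subtree at $g$, so $h$ computes the same value in $D|_{z=b}$ on $a$ as it does in $F$ on $a^\star$. Consequently, the number of off-path gates that fire in $D|_{z=b}$ on $a$ is at most the total number of gates of $F$ that fire on $a^\star$, which is at most $\fEC(F)$.

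Combining the two bounds gives $\EC(D|_{z=b}, a) \le \fEC(F) + \depth(F)$, and maximising over $a$ yields the proposition. The only subtle point is that the path gates can take very different values in $D|_{z=b}$ and in $F$ whenever $g$ cannot realise the value $b$ on $a^\star$; but this discrepancy is harmless for the energy count, because I simply charge each path gate the worst-case energy $1$, and all such contributions are absorbed by the additive $\depth(F)$ term. There is no serious technical obstacle here — the entire argument hinges on the structural fact that off-path gates are indifferent to how the input is extended into the subtree at $g$.
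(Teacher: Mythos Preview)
Your proposal is correct and follows essentially the same approach as the paper: both compare the firing pattern of $D|_{z=b}$ on a maximising input with that of $F$ on (an extension of) the same input, observe that gates not on the root-to-$g$ path agree in the two formulas, and absorb the at most $\depth(F)$ path gates into the additive term. Your write-up is in fact more careful than the paper's, which uses the input $a$ directly in $F$ without explicitly mentioning the extension $a^\star$.
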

\begin{proof}
Fix a $b \in \zo$ and let $a$ be an input on which $D|_{z=b}$ achieves the maximum energy. Consider the evaluation of gates in $F$ on this input $a$. If we ignore the gates in the subtree rooted at $g$ in $F$, as $F$ is a formula, the evaluation of gates on the input $a$ for $F$ and $D|_{z=b}$ can differ only on those gates that lie in the path from $g$ to the root. Hence, 
$$ \fEC(F) \ge \fEC(F,a) \ge \EC(D|_{z=b},a)-\depth(F) = \EC(D|_{z=b}) - \depth(F)$$
which completes the proof.
\end{proof}

\begin{lemma}[Upper Bound for $\fEC(F')$] \label{fml:ub}
Let $f:\zon \to \zo$ be computed by a Boolean formula $F$ with $\negs(F)\ge 1$. Then, the formula $F'$ computing $f$ obtained by applying the decomposition of~\cref{thm:gk} to $F$ satisfies,
\begin{equation}\label{eq:energy}
\fEC(F') \le (5\negs(F)-2)(\fEC(F)+\depth(F)+1).
\end{equation}

\end{lemma}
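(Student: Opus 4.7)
The plan is to split $\fEC(F')$ into contributions from the outer read-once formula $H$ and from each monotone inner formula $G_i$, and then to bound each $\fEC(G_i)$ using Proposition~\ref{fml:lb}.

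For any input $x$, since $F' = H(G_1, \ldots, G_T)$, every non-input gate of $F'$ lies in some $G_i$ or in $H$, so
\[
\fEC(F', x) \;\le\; \sum_{i=1}^T \fEC(G_i, x) \;+\; \fEC(H, y),
\]
where $y = (g_1(x), \ldots, g_T(x))$. Because $H$ is a read-once formula on $T$ inputs, its non-input gates number at most $O(T)$, so $\fEC(H)$ contributes only an additive $O(T)$ term that will eventually be absorbed into the $+1$ of the claimed bound.

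The key step is to show $\fEC(G_i) \le \fEC(F) + \depth(F)$ for each $i$. I would use the structure of the GK decomposition (\cref{thm:gk}): each $G_i$ is a monotone formula associated with a distinguished gate $u_i$ of $F$ (either the root or a gate feeding a negation), together with restrictions of the remaining wires. Concretely, I would show that each $G_i$ is isomorphic to a formula of the form $D_i|_{z = b_i}$, where $D_i$ arises from $F$ by replacing the subtree at some gate $v_i$ by a fresh variable $z$ and $b_i \in \zo$ is an appropriate restriction. Proposition~\ref{fml:lb} then directly gives $\fEC(G_i) \le \fEC(F) + \depth(F)$.

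Combining the bounds and using $T \le 5\negs(F) - 2$,
\[
\fEC(F') \;\le\; T(\fEC(F) + \depth(F)) + T \;=\; T(\fEC(F) + \depth(F) + 1) \;\le\; (5\negs(F) - 2)(\fEC(F) + \depth(F) + 1),
\]
which is the inequality in the lemma. The main obstacle is making precise the correspondence between each $G_i$ in the GK decomposition and a subtree-substitution of $F$: the decomposition pushes negations via De~Morgan's laws to peel off maximal monotone chunks between negation gates, and these rewrites can locally alter firing patterns. The $+\depth(F)$ slack built into Proposition~\ref{fml:lb} is precisely what is needed to absorb such discrepancies along any root-to-gate path, so once the identification is set up carefully, the energy bookkeeping goes through cleanly.
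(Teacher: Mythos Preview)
Your approach is genuinely different from the paper's. The paper does not use the final product form $F'=H(G_1,\dots,G_T)$ at all; instead it argues by strong induction on $\negs(F)$, literally retracing the recursive construction of~\cref{thm:gk} (the base case $\negs(F)=1$, then Case~1 with a $\neg$/$\land$/$\lor$ root, then Case~2) and bounding the energy added at each step. Your ``global'' decomposition into $\sum_i\fEC(G_i)+\fEC(H)$ is cleaner conceptually and avoids the case analysis, so it is worth recording. Two points need tightening, however.

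First, the $G_i$'s are not in general isomorphic to a formula $D_i|_{z=b_i}$ with $D_i$ obtained from $F$ by a single subtree substitution: each $G_i$ arises as $S_2|_{z=b}$ where $S$ is some \emph{subformula} of $F$ encountered during the recursion and $S_2$ is $S$ with one subtree replaced. What is true is that $G_i$ is a \emph{subformula} of such a $D_i|_{z=b_i}$; since $\fEC$ is monotone under taking subformulas, \cref{fml:lb} still yields $\fEC(G_i)\le\fEC(F)+\depth(F)$, so this is only a phrasing issue.

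Second, and more substantively, the bound $\fEC(H)\le T$ does not hold. Tracing the construction shows that every negation of $F$ ends up as a negation gate of $H$, so $H$ has $T-1$ binary gates and $\negs(F)$ negations, giving only $\fEC(H)\le T-1+\negs(F)$. Plugging this in yields
\[
\fEC(F')\le (5\negs(F)-2)(\fEC(F)+\depth(F)+1)+(\negs(F)-1),
\]
which is the stated bound up to an additive $\negs(F)-1$. This is perfectly adequate for \cref{energy:formula} (where everything is asymptotic), but it does not prove \cref{eq:energy} as written. To recover the exact constant one seems to need the per-step bookkeeping the paper does; alternatively, you could restate the lemma with a slightly larger constant and your argument goes through verbatim.
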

\begin{proof}
We proceed by tracing the construction in Theorem 3.1 of~\cite{GK17} (\cref{thm:gk}) where we bound the energy of the resulting formula thereby proving the result.

\noindent
[By strong induction on $\negs(F)$] For the base case with $\negs(F)=1$, let $F_1$ be the minimal formula containing all negations of $F$. If $F_1 = F$, then the root gate of $F$ must be a NOT gate and $F'=F$ satisfies~\cref{eq:energy}. Otherwise, let $F_2$ be the formula obtained by replacing $F_1$ in $F$ by a new variable $z$. As $F_1$ has the only negation gate of $F$, $F_2$ is monotone implying $F_2 = F_2|_{z=0} \lor (F_2|_{z=1} \land z)$. Also there exists a formula $\mytilde{F_1}$ such that $F_1 = \neg \mytilde{F_1}$ (see~\cref{fig:split}). Now the formula $F' = F_2|_{z=0} \lor (F_2|_{z=1} \land \neg \mytilde{F_1})$ computes the same function as $F$. Since $\fEC(F')$ is upper bounded by the energy of the individual formulas and the  connecting gates,
\begin{align*}
\fEC(F') & \le \fEC(F_2|_{z=0}) + \fEC(F_2|_{z=1}) + \EC(\mytilde{F_1}) + 3 \\
& \le 2\fEC(F)+2\depth(F)+\EC(\mytilde{F_1})+3 && [\text{\cref{fml:lb}}]\\
& \le 3(\fEC(F)+\depth(F)+1) && [\text{$\mytilde{F_1}$ is a subformula of $F$}]
\end{align*}

For the inductive case, let $F$ be any Boolean formula with $t=\negs(F)>1$ and the result holds for all formulas with negations less than $t$. Let $F_1$ be the smallest subformula of $F$ that contains all the negations of $F$. There can be two cases.

\begin{description}
\item [Case 1. $F_1$ is same as $F$:] In this case, we show that there is an $F'$ computing the same function as $F$ with $\fEC(F') \le (5\negs(F)-4)(\fEC(F)+\depth(F)+1)$ satisfying~\cref{eq:energy}. Based on the root gate of $F$, there can be two subcases. 

Suppose the root of $F$ is a NOT gate. Then, there exists a formula $E$ such that $F=\neg E$. Since $\negs(E) = \negs(F)-1$, by induction, there exists an $E'$ computing the same function as $E$ with $\fEC(E') \le (5\negs(E)-2)(\fEC(E)+\depth(E)+1)$. Now the formula $F'=\neg E'$ computes the same function as $F$. Estimating $\fEC(F')$, we have
\begin{align*}
\fEC(F') & \le \fEC(E')+1 \\
& \le (5\negs(E)-2)(\fEC(E)+\depth(E)+1) +1&& [\text{Induction}] \\
& \le (5(\negs(F)-1)-2)(\fEC(E)+\depth(F))+1 && [\text{$\depth(E) = \depth(F)-1$}] \\ 
& \le (5\negs(F)-4)(\fEC(F)+\depth(F)+1) && [\text{$\fEC(E) \le \fEC(F)+1$}]
\end{align*}

Suppose the root of $F$ is AND/OR. Without loss of generality, let the root be OR gate. A similar argument holds for the case of AND gate. Then, let $F = E_\ell \lor E_r$. where $E_\ell, E_r$ are the left and right subtrees of the root, respectively. Since $E_\ell$ and $E_r$ are subformulas of $F$, observe that $\fEC(E_\ell) \le \fEC(F)$ and $\fEC(E_r) \le \fEC(F)$. Since $F_1=F$, it must be that  $\negs(E_\ell) \ge 1$ and $\negs(E_r) \ge 1$. Hence, by induction, there exists formulas $E_{\ell}'$ and $E_r'$ computing the same function as $E_\ell$ and $E_r$, respectively. Consider the formula $F' = E_{\ell}' \lor E_r'$. We now show that $F'$ satisfies the required energy bound.
\begin{align*}
\fEC(F') &  \le \fEC(E_{\ell}') + \fEC(E_r')+1  \\
& \le (5\negs(E_\ell)-2)(\fEC(E_\ell)+\depth(E_\ell)+1) \\
 & \quad + (5\negs(E_r)-2)(\fEC(E_r)+\depth(E_r)+1) + 1 && [\text{Induction}] \\
&  \le (5\negs(E_\ell)-2)(\fEC(F)+\depth(F)) \\
 & \quad + (5\negs(E_r)-2)(\fEC(F)+\depth(F)) + 1 && [\text{$\depth(E_\ell),\depth(E_r) \le \depth(F)-1$}] \\
 & \le (5\negs(F)-4)(\fEC(F)+\depth(F)+1)
\end{align*}

\item [Case 2. $F_1$ is not same as $F$:] Let $F_2$ be the formula obtained 
by replacing $F_1$ in $F$ by a new variable $z$. Similar to the argument in 
the base case, $F'=F_2|_{z=0} \lor (F_2|_{z=1} \land F_1)$ computes the same 
function as $F$. Since $F_1$ does not have a smaller subformula containing all  
its negations, we can apply Case 1 to $F_1$ to get a formula $F_1'$ computing 
same function as $F_1$ with $\fEC(F_1') \le (5\negs(F_1)-4)(\fEC(F_1)+
\depth(F_1)+1)$. Hence, 
\begin{align*}
\fEC(F') & \le \fEC(F_2|_{z=0}) + \fEC(F_2|_{z=1}) + \fEC(F_1') + 2 \\
& \le 2\fEC(F) + 2\depth(F) + \fEC(F_1') + 2 && [\text{\cref{fml:lb}}]\\
& \le (5\negs(F_1)-4)(\fEC(F_1)+\depth(F_1)+1) \\
& \quad + 2(\fEC(F)+\depth(F)+1)\\
& \le (5\negs(F)-2)(\fEC(F)+\depth(F)+1) && [\text{$F_1$ is a subformula of $F$}] 
\end{align*}

\end{description}
\end{proof}

\begin{lemma}[Lower Bound for $\fEC(F')$] \label{fml:gf:lb}
Let $F$ be a formula and $F'$ be the formula obtained by applying~\cref{thm:gk} to $F$. Then, $\fEC(F') \ge \L(F)-(5\negs(F)-2)$.
\end{lemma}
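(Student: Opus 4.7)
The plan is to exhibit a single input on which $F'$ must have many gates firing, namely the all-ones input $1^n$. Since each $G_i$ is a \emph{monotone} subformula and its leaves can be taken without loss of generality to be positive literals (we may assume each $g_i$ is non-constant, absorbing trivial $g_i$ into $H$; this rules out $0$-constant leaves in $G_i$ that would otherwise sabotage the argument), on the all-ones input every leaf of every $G_i$ evaluates to $1$ and hence every internal gate of every $G_i$ also outputs $1$. Since each $G_i$ is a binary tree, this contributes exactly $\L(G_i) - 1$ firing gates per~$G_i$.

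Next I would use the structural fact that $H$ is a read-once formula on $T$ variables, each of which is fed by one of the $G_i$, so the leaves of $F'$ are precisely the disjoint union of the leaves of the $G_i$'s, i.e.\ $\L(F') = \sum_{i=1}^T \L(G_i)$. Summing the firing gates across the $G_i$'s gives
\[
\fEC(F') \;\ge\; \fEC(F', 1^n) \;\ge\; \sum_{i=1}^T (\L(G_i) - 1) \;=\; \L(F') - T.
\]

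To finish, I would establish the auxiliary inequality $\L(F') \ge \L(F)$, which is implicit in the Guo--Komargodski construction but not explicitly asserted in \cref{thm:gk}. This should follow by induction on $\negs(F)$ tracing exactly the case analysis used in the proof of \cref{fml:ub}: in the base case $F' = F_2|_{z=0} \lor (F_2|_{z=1} \land \neg \mytilde{F_1})$, and counting only variable leaves (the substitutions $z=0,1$ replace $z$ by a constant leaf that is not counted), one gets $\L(F') = 2(\L(F_2)-1) + \L(F_1) = 2\L(F) - \L(F_1) \ge \L(F)$; the inductive cases (root NOT, root $\land/\lor$ both containing negations, or a proper subformula $F_1$) propagate $\L(F') \ge \L(F)$ directly by the inductive hypothesis and additivity of leaf counts across an $\lor$/$\land$. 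Combining, $\fEC(F') \ge \L(F') - T \ge \L(F) - T \ge \L(F) - (5\negs(F) - 2)$, which is exactly the claim.

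The main obstacle I anticipate is this auxiliary bound $\L(F') \ge \L(F)$, since it requires opening up the GK construction rather than treating \cref{thm:gk} as a black box; once it is in hand, the rest of the argument is a clean counting exercise exploiting the fact that a monotone binary tree fires every gate on the all-ones input.
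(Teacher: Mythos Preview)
Your proposal is correct and follows essentially the same approach as the paper: evaluate $F'$ on the all-ones input, use monotonicity of each $G_i$ to get $\fEC(G_i,1^n)=\L(G_i)-1$, and then invoke the structural facts (which the paper also extracts by tracing the Guo--Komargodski construction) that the leaves of $F'$ are exactly the union of the leaves of the $G_i$ and that every leaf of $F$ appears among them, so that $\sum_i \L(G_i)\ge \L(F)$. Your explicit inductive verification of $\L(F')\ge \L(F)$ is a bit more detailed than the paper's one-line appeal to the construction, but the substance is identical.
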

\begin{proof}
By~\cref{thm:gk} the $F'$ obtained is a composition of a read-once formula $H$ over monotone formulas $G_1,\ldots,G_T$ for $T \le 5\negs(F)-2$. 
In addition, by tracing the construction of $F'$ in the proof of~\cref{thm:gk}, it can be inferred that (1) all leaves of $F'$ forms a part of some monotone formula $G_i$ and (2) every leaf in $F$ must appear at least once as a leaf of $F'$. Now,

\begin{align*}
\fEC(F') & \ge \fEC(F',1^n) \\
 & \ge \sum_{i=1}^T \fEC(G_i,1^n) \\ 
 & \ge \sum_{i=1}^T (\L(G_i)-1) && [\text{$G_i$s are monotone}] \\
 & \ge \L(F)-T && [\text{By Property (1) and (2)}]\\
 & \ge \L(F) - (5\negs(F)-2)
\end{align*}
\end{proof}

\cref{energy:formula} holds directly from the following cumbersome but slightly stronger claim. 

\begin{claim}
For any formula $F$,  $\fEC(F) = \Omega\left(\sqrt{\L(F)+\depth(F)^2+\depth(F)}-\depth(F)\right)$.
\end{claim}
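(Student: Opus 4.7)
The plan is to chain the two preceding lemmas into a single quadratic inequality in $\fEC(F)$ and solve it. The boundary case $\negs(F)=0$ (monotone $F$) is handled separately: there $\fEC(F)\ge \L(F)-1$, which easily dominates the claim for $\L(F)\ge 2$. Assume henceforth $\negs(F)\ge 1$; this already gives the trivial bound $\fEC(F)\ge 1$, since some pair of adjacent gates (a negation and its input, or a leaf negation on input $0^n$) must fire on some input.

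Chaining \cref{fml:gf:lb} with \cref{fml:ub} yields
\[
\L(F)-(5\negs(F)-2)\;\le\;(5\negs(F)-2)\bigl(\fEC(F)+\depth(F)+1\bigr),
\]
which rearranges to $\L(F)\le (5\negs(F)-2)(\fEC(F)+\depth(F)+2)$. Substituting $\negs(F)\le \fEC(F)$ from \cref{energy:negs}, this becomes $\L(F)\le 5\,\fEC(F)(\fEC(F)+\depth(F)+2)$. Writing $e=\fEC(F)$, $d=\depth(F)$, and $\L=\L(F)$, applying the quadratic formula to $5e^2+5(d+2)e\ge \L$ gives
\[
e\;\ge\;\tfrac{1}{2}\!\left(\sqrt{(d+2)^2+4\L/5}-(d+2)\right)\;=\;\frac{2\L/5}{\sqrt{(d+2)^2+4\L/5}+(d+2)},
\]
where the second equality uses the rationalization $\sqrt{A}-\sqrt{B}=(A-B)/(\sqrt{A}+\sqrt{B})$.

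Rationalizing the target expression identically, $\sqrt{\L+d^2+d}-d=(\L+d)/(\sqrt{\L+d^2+d}+d)$, I would finish by matching asymptotics in two regimes. When $\L\ge d^2$, both expressions are $\Theta(\sqrt{\L})$; when $\L<d^2$, the target is $\Theta(\L/d+1)$, where the additive $1$ is supplied by the trivial bound $\fEC(F)\ge 1$ noted above and the $\L/d$ piece is supplied by the derived quadratic bound. A direct application of $\sqrt{a+b}\le\sqrt{a}+\sqrt{b}$ to the denominators shows that the two rationalized forms agree up to a constant factor, which the $\Omega$ absorbs. The main obstacle is precisely this regime-by-regime comparison; once done carefully, taking the maximum of the quadratic bound and $\fEC(F)\ge 1$ yields $\fEC(F)=\Omega(\sqrt{\L(F)+\depth(F)^2+\depth(F)}-\depth(F))$, as claimed.
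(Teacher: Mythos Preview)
Your approach is correct and essentially the same as the paper's: both chain \cref{fml:gf:lb} with \cref{fml:ub}, invoke $\negs(F)\le\fEC(F)$ from \cref{energy:negs}, and solve the resulting quadratic. The only difference is organizational---the paper keeps $\negs(F)$ as a free parameter and balances the two lower bounds $\fEC(F)\ge \L(F)/(5\negs(F)-2)-\depth(F)-2$ and $\fEC(F)\ge\negs(F)$ at their crossing point $\alpha$, whereas you substitute $\negs(F)\le\fEC(F)$ directly into the product inequality to get a quadratic in $\fEC(F)$ alone; both routes give the same bound up to constants, and your regime split together with $\fEC(F)\ge 1$ correctly handles the comparison with the target expression.
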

\begin{proof}
If $\negs(F) = 0$, then $F$ is monotone and $\fEC(F) = \fEC(F,1^n) = \L(F)-1$. Otherwise, $\negs(F)\ge 1$ and applying~\cref{fml:ub} we have $\fEC(F') \le (5\negs(F)-2)(\fEC(F)+\depth(F)+1)$ and by~\cref{fml:gf:lb} the formula $F'$ obtained satisfy,  $\fEC(F') \ge \L(F)-(5\negs(F)-2)$.

Combining the two bounds on $\fEC(F')$, we have $\fEC(F) \ge  \frac{\L(F)}{5\negs(F)-2}-\depth(F)-2$. Along with~\cref{energy:negs}, we have $$\fEC(F) \ge \max\set{\frac{\L(F)}{5\negs(F)-2}-\depth(F)-2,\negs(F)}$$ 
Let $\alpha$ be the largest possible value such that $\frac{\L(F)}{5\alpha -2} -\depth(F)-2 \ge \alpha$. 
This gives a quadratic equation in $\alpha$ and it can be verified that the  maximizing $\alpha$  is  $\frac{\sqrt{(5\depth(F)+12)^2+20\L(F)} - (5\depth(F)+8)}{10}$.

If $\negs(F)$ is at least $\alpha$, then $\fEC(F) \ge \alpha$. Otherwise, $\fEC(F)$ is lower bounded by $\frac{\L(F)}{5\alpha -2} -\depth(F)-2$ which, by our choice, is at least $\alpha$. Hence in both cases, $$\fEC(F) \ge \alpha = \Omega\left (\sqrt{\L(F)+\depth(F)^2+\depth(F)}-\depth(F) \right ).$$

\end{proof}

\section{Comparison of Lower Bound Techniques for Energy Complexity and a Recent Improvement}\label{ec:lb:dt}
So far, we have seen two techniques to show lower bound for energy complexity one in terms of positive sensitivity (\cref{psens:energy}) for any Boolean function and other in terms of cost of monotone Karchmer-Wigderson game (\cref{kw:lb}) for monotone Boolean functions. In this section, we give a comparison of lower bound techniques for energy complexity with regard to a recent improvement due to Sun~\etal~\cite{SSWX18}.

\cref{energy:dt} says that every Boolean function of small decision tree depth has a  small energy circuit computing it. In the context of proving lower bounds on energy complexity, a natural question to ask is whether a converse of \cref{energy:dt} is true. That is, does a circuit of small energy have a small depth decision tree computing it. More precisely, 
\begin{question} \label{quest}
Is it true that for all Boolean functions $f$, $\DT(f) \le \poly(\EC(f))$ ?
\end{question}

In this context, we give our approach to answer this question using a measure called \textit{max-entropy of a circuit} introduced by Uchizawa \etal~\cite{UDM06}. For a circuit $C$, the max-entropy, denoted by $\entm(C)$ is the logarithm of the number of firing patters of the circuit $C$. As a part of main result, they showed that for any threshold circuit $C$ computing a Boolean function $f$, $\entm(C) \ge \EC_{\calT}(f) -1$. Hence $\entm(C)$ can be seen as yet another measure of energy complexity for threshold circuits. 

Since the same result does not directly extend to circuits over Boolean basis $\calB$, we ask, in a spirit similar to the result of Uchizawa \etal~\cite{UDM06}, if max-entropy is also a measure of energy for Boolean circuits. 
We show in~\cref{entropy-dt} an analogous result for Boolean circuits that for any Boolean function $f$, and a circuit $C$ computing $f$, $\entm(C) = \Omega(\log \DT(f))$. 

\begin{lemma}\label{entropy-dt}
	For a Boolean function $f: \zon \to \zo$, let $C$ be any Boolean circuit computing 
	$f$ having gates computing an arbitrary function of a finite arity. Then, $\DT(f)$ is, asymptotically, at most the number of firing patterns of $C$. Hence $\entm(C) = \Omega(\log \DT(f))$.
\end{lemma}
\begin{proof}
	Let the number of firing patterns of $C$ be $t$ and $\ell$ be the maximum arity of gates 
	in $C$. We show that there exists a decision tree computing $f$ of depth $\ell \cdot t$. 
	Since $\ell$ is a finite constant, $t \ge \DT(f)/\ell = \Omega(\DT(f))$.
	
	Proof is by strong induction on $n$. For $n=1$, $\DT(f) \le 1$ and there must be at 
	least 
	one firing pattern for $C$. Hence $\DT(f) \le \ell \cdot t$.
	Suppose the claim holds for all Boolean functions on $<n$ variables. Let $f$ be an 
	$n$ bit Boolean function computed by a circuit $C$ of size $s$ with gates of fan-in 
	at most $\ell$. For the circuit $C$, let there be $t$ distinct firing patterns $p_1,p_2,\ldots,p_t$ 
	where each $p_i \in \zo^s$. Let $C'$ be the circuit obtained 
	from $C$ by removing all the gates that have the same value in all the firing 
	patterns. Observe that this transformation does not alter the number of firing 
	patterns and let $p_1',p_2',\ldots,p_t'$ be the firing patterns of $C'$. Let $g$ be 
	a gate in $C'$ whose evaluation depends only on input variables. Let $f'$ be the function $f$ after setting the 
	queried variables to the values read. Also set the queried values in $C'$ and 
	evaluate the circuit (as far as possible) to get $C''$ which computes $f'$. Since 
	$f'$ is on $\le n-\ell$ variables, by induction, $\DT(f') \le \ell \times 
	\text{Number of firing patterns of $C''$}$.

	Since the value of gate $g$ is fixed, $C''$ can have at most $t-1$ firing patterns (for otherwise, all the firing patterns have the same value for gate $g$ due to which $g$ would have been removed in $C'$, a contradiction). Hence, $\DT(f) \le \DT(f') +\ell \le \ell \times \text{Number of firing patterns of $C''$} + \ell \le \ell\cdot (t-1) + \ell = \ell \cdot t$. 
\end{proof}
However, this result does not give a meaningful lower bound for energy complexity of $f$. To see this, a circuit with $s$ internal gates and energy $e$ can potentially have $\sum_{i=0}^e \binom{s}{i} \le s^e+1$ firing patterns implying $\entm(C) \le e \log s$.
Now, \cref{entropy-dt} implies that $e = \Omega(\log \DT(f)/\log s)$. However, $\log \DT(f)/\log s = O(1)$ as $\DT(f)= O(s)$.

\cref{entropy-dt} can be seen as constructing a decision tree for $f$, given the firing patterns of a circuit computing $f$. Recently, Sun~\etal~\cite{SSWX18} directly constructed a decision tree of depth $\EC(f)^2$, thereby implying the following.

\begin{theorem}[Sun~\etal~\cite{SSWX18}] \label{eq:sun}
For all Boolean functions $f, \EC(f) \ge \sqrt{\DT(f)}$.
\end{theorem}

This answers~\cref{quest} in affirmative as $\DT(f) \le \EC(f)^2$. The original statement, Theorem 2 of Sun~\etal~\cite{SSWX18}, states that $\EC(f) = \Omega(\sqrt{\DT(f)})$. A careful analysis of their proof reveals that the asymptotic constant is actually $1$.
  In this context, we give two instances where the result of Sun~\etal~\cite{SSWX18} 
can be used to further improve our results from~\cref{sec:psens:lb} and~\cref{sec:kw:lb}.
\begin{itemize}
	\item We showed that $\EC(\land_n) \ge n/3$ based on the measure positive sensitivity (\cref{psens:energy}). But it completely fails to give any non-trivial lower bound for $\EC(\lor_n)$ since $\psens(\lor_n) = 1$. Since $\DT(\lor_n) = n$, by~\cref{eq:sun} this implies that $\EC(\lor_n) \ge \sqrt{n}$ (as observed by Sun~\etal \cite{SSWX18}).
	\item Consider the problem of $\stconn$ which, given a directed graph $G$ on $\binom{n}{2}$ edges and two vertices $s$ and $t$, asks if there is a path from $s$ to $t$ in $G$. It is known that $\KW^+(\stconn) = \Omega(\log^2 n)$~\cite{KW90}. Hence, \cref{kw:lb} implies that $\EC(\stconn) = \Omega(\log^2 n)$. It can be argued that $\DT(\stconn)=\Omega(n^2)$. One way to see this is to observe that connectivity is a non-trivial monotone property of graphs and such properties have decision tree depth of $\Omega(n^2)$~\cite{KSS84}. Hence, by~\cref{eq:sun} we get that $\EC(\stconn) \ge \sqrt{\DT(\stconn)} = \Omega(n)$ which vastly improves what could be inferred via our bound.
\end{itemize}

\section{Discussion and Questions}\label{sec:conc}
Having studied $\EC(f)$ as a Boolean function parameter for different circuit models over the Boolean basis $\calB$, following are some natural questions that are left unanswered.
\begin{itemize}
\item For unbounded fan-in circuits of depth $3$, we showed an energy lower bound of $\Omega(n)$ for parity on $n$ bits (\cref{lb:depththree:raz}). The question here is to extend the same to arbitrary depth unbounded fan-in circuits.
\item For any Boolean formula $F$, we showed a lower bound for $\fEC(F)$ in terms of its size and depth (\cref{energy:formula}). Can we remove the dependence on depth thereby showing that for all Boolean functions $f$, $\fEC(f) = \Omega(\sqrt{\L(f)})$ ?

\end{itemize}

\section*{Acknowledgments}
The authors would like to thank the anonymous reviewers for their constructive comments.

\bibliographystyle{alpha}
\bibliography{master}

\ifthenelse{\equal{\movetoappendix}{1}}{
	\appendix
	\section{Appendix}
	\includecollection{appendix}
} { }

\end{document}